\newtheorem{lemma}{Lemma}
\newtheorem{proposition}{Proposition}
\newtheorem{proof}{Proof}
\theoremstyle{nonumberplain}
\begin{document}
\title{Joint Multi-Targets Detection and Tracking with mmWave Radar}
\author{Jiang Zhu, Menghuai Xu, Ruohai Guo, Fangyong Wang, Guangying Zheng and Fengzhong Qu
\thanks{This work is supported by the Zhejiang Provincial Natural Science Foundation of China under Grant LY22F010009, National Natural Science Foundation of China under Grant 62371420 and 61901415, National Natural Science Fund for Distinguished Young Scholars under Grant 62225114. (\emph{Corresponding Author: Ruohai Guo.})
	
	Jiang Zhu, Ruohai Guo,  Menghuai Xu and Fengzhong Qu are with the Ocean College, Zhejiang University, No.1 Zheda Road, Zhoushan, 316021, China (email: \{jiangzhu16, menghuaixu, rhguo, jimqufz\}@zju.edu.cn).
	
	Fangyong Wang is with the Hanjiang national Laboratory, Wuhan, 430000, China (email: hzwfy@163.com).

    Guangying Zheng is with the Science and Technology on Sonar Laboratory, Hangzhou, 310023, China  (email: 18069844401@163.com).
	}}
\maketitle

\begin{abstract}
	Accurate targets detection and tracking with mmWave radar is a key sensing capability that will enable more intelligent systems, create smart, efficient, automated system. This paper proposes a sequential pipeline framework, i.e., detection-association-tracking, named MNOMP-SPA-KF consisting of the target detection and estimation module, the data association (DA) module and the target tracking module. In the target estimation and detection module, a low complexity, super-resolution and constant false alarm rate (CFAR) based two dimensional multisnapshot Newtonized orthogonal matching pursuit (2D-MNOMP) is designed to extract the multi-targets's radial distances and velocities, followed by the conventional (Bartlett) beamformer to extract the multi-targets's azimuths. In the DA module, a sum product algorithm (SPA) is adopted to obtain the association probabilities of the existed targets and measurements by incorporating the radial velocity information. The Kalman filter (KF) is implemented to perform target tracking in the target tracking module by exploiting the asymptotic distribution of the estimators. To improve the detection probability of the weak targets, extrapolation is also coupled into the MNOMP-SPA-KF. Numerical and real data experiments demonstrate the effectiveness of the MNOMP-SPA-KF algorithm, compared to other benchmark algorithms.
\end{abstract}
\begin{keywords}
	millimeter-wave (mmWave) radar, super-resolution, multi-targets detection and tracking, sum-product algorithm (SPA) based data association
\end{keywords}

\section{Introduction}
Target detection and tracking play pivotal roles across various domains, including computer vision, robotics, military and defense, surveillance and security, and autonomous systems. Undoubtedly, the efficacy of target detection and tracking applications is based on the paramount considerations of accuracy and precision. Although sensors like passive infrared (PIR) and time of flight (TOF) bring their unique advantages, they are not without limitations that can impact their performance. For example, PIR sensors, adept at detecting infrared radiation emitted by objects, offer a cost-effective and straightforward implementation. However, their measurements are sensitive to temperature fluctuations, rendering them susceptible to false alarms triggered by environmental changes such as shifts in sunlight and heat sources. However, TOF sensors deliver highly accurate distance information but struggle with challenges such as false alarms due to interference from reflective surfaces and susceptibility to environmental changes such as darkness, smoke, or fog \cite{DG01000}.

Undoubtedly, millimeter-wave (mm-wave) radar technology offers numerous advantages, rendering it highly suitable for diverse applications. MM-wave radar excels at providing precise measurements of the position and velocity of objects, including individuals, within a specified area. In contrast to PIR or TOF sensors, which may be influenced by environmental conditions such as rain, dust, and smoke, mm-wave radar maintains optimal performance even in adverse weather conditions. This reliability extends to both outdoor and indoor environments. Furthermore, mm-wave radar sensors exhibit the capability to operate seamlessly in complete darkness or bright daylight. This versatility makes mm-wave radar an ideal solution for applications that require uninterrupted monitoring, regardless of the time of day \cite{MTT for network}.

Although mm-wave radar has unique advantages over PIR or TOF sensors in target sensing and tracking, it encounters challenges, particularly amidst strong background clutters. In such scenarios, the reflected signal from weaker targets, such as pedestrians with a small radar cross-section (RCS), often gets overshadowed by the reflected waves from formidable targets like surrounding buildings and trucks. Consequently, these weaker targets may go unnoticed. Additionally, the widely recognized issue lies in the estimation and detection algorithm, which tends to generate numerous false alarms, imposing a substantial computational burden to manage and alleviate their impact on overall system performance. 

To tackle the first challenge, a low-complexity, super-resolution compressed sensing (CS) based algorithm incorporating constant false alarm rate (CFAR) should be developed. This addresses target-masking effects, enabling the estimation and detection of weak targets near strong targets. Addressing the second challenge involves the integration of the detection, estimation, and tracking algorithms. This integration ensures that false alarms generated in the detection stage are significantly suppressed during the tracking stage, by allowing a higher false alarm probability in the detection stage. Both serve as the motivation behind our work.

\subsection{Related Works}
The joint target detection, estimation, and tracking algorithm encompasses the extraction of targets' states, comprising azimuth, radial distance, and radial velocity, from the baseband data. This algorithm proceeds by conducting target detection to ascertain the number of targets, executing data association among the targets' states across various time instants, and employing the Kalman filter (KF) for single target tracking. Given the ubiquity of the KF as a standard algorithm, our focus lies predominantly on detailing the relevant works pertaining to target estimation and detection, as well as the data association (DA) algorithm. Additionally, Texas Instruments (TI) has presented a system solution for people tracking and counting utilizing mmWave radar sensors \cite{TIpeopleTracking, TICountandTrack}, and this solution will also be elaborated upon in this discussion.

\subsubsection{Targets Detection and Estimation}
The linear frequency modulated continuous wave (LFMCW) mmWave radar operates by transmitting a chirp signal, and its receiver, equipped with a uniform linear array (ULA), captures the signal reflected by targets within the field of view. Following the dechirping process, the received signal can be conceptualized as a three-dimensional line spectral. Extracting frequencies from this line spectral allows for the estimation of the target's states. Consequently, our focus in this context is on reviewing the pertinent literature regarding line spectral estimation and detection.

The fast Fourier transform (FFT) algorithm \cite{DFT-based Estimation} and  the constant false alarm rate (CFAR) detector \cite{FundamentalsRadarSP} are widely recognized as the most commonly used method for extracting frequencies and target detection, respectively. Despite their simplicity and effectiveness, they do come with limitations. The estimation accuracy of FFT is constrained by the grid spacing, limiting its performance \cite{Modelmismatch}. In addition, the CFAR detector faces challenges where a weak target near a strong target might be missed due to the strong sidelobe of the latter. To overcome these issues, compressive sensing (CS) methods have been introduced. These encompass variational line spectra estimation (VALSE) \cite{VALSE}, atomic norm soft thresholding (AST) \cite{AST}, Newtonized orthogonal matching pursuit (NOMP) \cite{MadhowNOMP}, CFAR-based NOMP (NOMP-CFAR) \cite{NOMP_CFAR}, among others. These methods leverage the sparse nature of targets in the range, Doppler, and spatial domains. Among them, NOMP stands out for its speed but necessitates knowledge of the noise variance. NOMP-CFAR integrates an adaptive CFAR detector into NOMP, eliminating the need for noise variance information but at the cost of increased computational complexity due to forward-backward steps. In contrast, this paper proposes a low-complexity NOMP-CFAR based scheme that is fast and does not require knowledge of the noise variance.
\subsubsection{Data Association}

The challenge of data association and target tracking revolves around extracting target trajectories based on sequences of sensor measurements \cite{FundofObjTrack}. The nearest neighbors (NN) algorithm \cite{NN}, a classical tracking approach, achieves target tracking by matching the current measurement closest to the predicted measurement generated by the target. While it boasts speed and simplicity, it overlooks uncertainties in data association, performing suboptimally in low SNR or cluttered scenarios. The probabilistic data association (PDA) algorithm \cite{PDA}, calculating the average of measurements weighted by association probability, also offers speed and simplicity and excels in scenarios with a single target. However, its performance suffers in scenarios with multiple targets. Addressing this limitation, the joint probabilistic data association (JPDA) algorithm \cite{JPDA, trackerJPDA} computes joint marginal associate probabilities, outperforming NN and PDA in scenarios with multiple targets but making mistakes in complex situations.

The belief propagation (BP) method, also known as the sum-product algorithm (SPA), offers an efficient and scalable solution for data association problems. Operating through ``message passing'' along the edges of the factor graph representing the statistical model, SPA-based multi-target methods have been proposed \cite{WilliamsTAES, SPA1, SPA, SelfTuningwithBP, ScalableDTofGEO, GNNenhanceBP}. SPA provides ``soft'' associations between targets and measurements, featuring advantages of low computational complexity and implementation flexibility. This paper adopts the SPA for DA while incorporating the radial velocity information to calculate the association probabilities of the targets, which benefits the data association.

\subsubsection{TI's System Solution}
TI has given a system solution for people tracking and counting using mmWave radar sensor \cite{TIpeopleTracking, TICountandTrack}.
This solution uses FFT and Capon beamforming to obtain the detected points with range, velocity, azimuth, and elevation information.
With the measurement data in polar coordinate and track objects as the inputs of the tracking algorithm, each iteration of the algorithm contains following steps: Point cloud tagging, predicting, associating, allocating, updating and reporting.
The process of associations includes gating and scoring, each tracking unit indicates whether one measurement point is close enough (gating), and if it is, to provide the bidding value (scoring), and the point is assigned to the highest bidder.
During the process of allocating, the new tracking unit is allocated from the points not assigned to any existed track by certain criteria.
During the update step, tracks are updated based on the set of associated points and extended Kalman filter. Compared to TI's solution, this paper uses the superresolution algorithm to acquire the finer state information and detect the weaker targets more reliably. In addition, the correlation of the measurements are taken into account by deriving the Cram\'{e}r-Rao bound (CRB), and the advanced SPA algorithm incorporating the radial velocity information is also integrated to improve the tracking performance.

\subsection{Main Contributions}
In this paper, we present a comprehensive algorithm, MNOMP-SPA-KF, designed for mmWave radar systems, specifically addressing joint multi-targets detection, estimation and tracking. This algorithm integrates MNOMP for target estimation and detection, SPA for data association, and KF for target tracking within a unified framework. The approach involves transforming three-dimensional line spectrum estimation and detection into two-dimensional counterparts with multiple measurement vectors. Subsequently, we optimize the two-dimensional NOMP-CFAR \cite{NOMP_CFAR} by significantly reducing its computation complexity through removing the backward steps. A logic variable is introduced to prevent potential oversight of weak targets masked by stronger ones. Additionally, the proposed algorithm streamlines computation by avoiding multiple FFT computations. The next step involves deriving the CRB for two-dimensional positions and establishing the measurement model in Cartesian coordinates. This model serves to determine the likelihood of the target state. The state-of-the-art SPA algorithm is then employed for DA, with the estimated radial velocity of targets incorporated into SPA for deciding target existence. Extrapolation is introduced into MNOMP-SPA-KF to enhance detection probability. Notably, MNOMP-SPA-KF exhibits high flexibility, allowing a higher false alarm rates in the detection stage while suppressing the false alarms in the tracking stage. This flexibility enables detecting a weaker target, compared to MNOMP alone. Finally, numerical and real experiments are conducted to validate the effectiveness of the MNOMP-SPA-KF algorithm, compared to other joint multi-targets detection, estimation and tracking algorithms.

%
%
%
%
%


\section{Problem Setup}\label{ProblemSetup}
This section briefly describes both the  measurement model of the LFMCW mmWave radar and the basic multi-targets tracking method.


\begin{figure}
	\centering
	\includegraphics[width = 80mm]{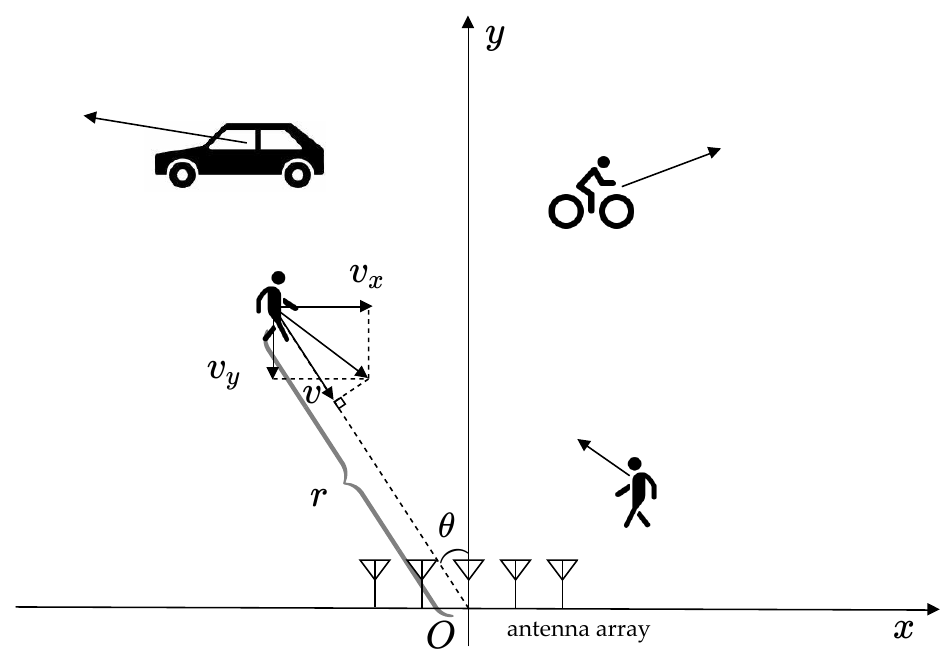}
	\caption{The typical measurement setup of the LFMCW mmWave radar. For simplicity, both the target index and the time index are omitted. }\label{Scenario}
\end{figure}
\subsection{Measurement Model}\label{MeasModel}
Fig.\ref{Scenario} describes the typical measurement setup of the LFMCW radar.  For the LFMCW radar, after dechirping, lowpass filtering and sampling in the $t$th frame, the baseband signal can be described as a three dimensional tensor ${\mathcal Y}(t) \in{\mathbb C}^{N\times M\times L}$, whose $(n, m, l)$th element is
\begin{align}\label{radarmeas}
	&{\mathcal Y}_{n,m,l}(t) =
	\sum_{k = 1}^{K(t)} \gamma_k(t)
	{\rm e}^{{\rm j} (n - 1) \frac{4 \pi}{c} \mu T_{\text s} r_k(t)}
	{\rm e}^{{\rm j} (m - 1) \frac{4 \pi}{c} f_c {T}_{\text{r}} v_k(t)}
	{\rm e}^{{\rm j} (l - 1) \frac{2 \pi}{c} f_c d \sin \theta_k(t)}
	+ \mathcal{\epsilon}_{n, m, l}(t),\notag\\
	&n = 1, \ldots, N, m = 1, \ldots, M, l = 1, \ldots, L,
\end{align}
where $n$, $m$ and $l$ denotes the index of the fast-time domain, slow-time domain and spatial domain, respectively; $N$, $M$ and $L$ are the number of measurements in fast-time domain, slow-time domain and spatial domain, respectively; ${K(t)}$ denotes the number of targets;
$\gamma_k(t)$ is the complex amplitude of the $k$th target;
$r_k(t)$, $v_k(t)$ and $\theta_k(t)$ denote the radial distance, the radial velocity and the azimuth of the $k$th target, respectively; $c$, $\mu$, $T_{\rm s}$, $f_c$, $T_c$, $d$ denote the speed of the electromagnetic wave, the chirp rate, the sampling interval, the carrier frequency, the chirp period, the antenna element spacing, respectively. The antenna element spacing $d$ is set as the half of wavelength $\lambda$, i.e., $d = \lambda / 2$ where $\lambda = c / f_c$; $\mathcal{\epsilon}_{n,m,l}(t)$ is the additive noise and is supposed to be independent and identically distributed (i.i.d.) across the three domains (namely the fast-time domain, slow-time domain and spatial domain) and the time, and $\mathcal{\epsilon}_{n,m,l}(t) \sim {\mathcal {CN}}(0,\sigma_{\rm b}^2)$ with $\sigma_{\rm b}^2$ being the variance of the noise.

For the LFMCW radar, the key concepts are the resolutions and limits of the range, velocity and angle \cite{TImmwave}. Here we briefly introduce those as follows.
\begin{itemize}
	\item The range resolution $r_{\rm res} = \frac{c}{2 B}$ depends only on the bandwidth $B = \mu T_{\rm ramp}$ swept by the chirp, where $T_{\rm ramp}$ is the ramp time of LFMCW radar.
	\item The maximum range $r_{\rm max}$ is limited by the ADC sampling rate $F_{\rm s}=1/T_{\rm s}$ and the chirp rate given by $r_{\rm max} = \frac{c}{2 \mu T_{\rm s}}$. 
	\item The maximum measurable radial velocity $v_{\rm max}=\frac{\lambda}{4T_c}$ is inversely proportional to the chirp interval $T_c$. Thus higher $v_{\rm max}$ requires closely spaced chirps.
	\item The velocity resolution $v_{\rm res} = \frac{\lambda}{2NT_c}$ is inversely proportional to the frame time, i.e., the product of the number of the slow time dimensions and the chirp interval $T_c$.
	\item The maximum field of view $\theta_{\rm max}$ is $\theta_{\rm max}=\sin^{-1}\left(\frac{\lambda}{2d}\right)$. A spacing $d$ of half wavelength $\lambda/2$ results in the largest field of view $\pm 90^{\circ}$.
	\item The angle resolution $\theta_{\rm res}$ is given by $\theta_{\rm res} = \frac{\lambda}{Ld \cos \theta}$ and simplifies to $\theta_{\rm res} = \frac{2}{L\cos \theta}$ with $d=\lambda/2$. This resolution depends on the azimuth angle, reaching its maximum of $\frac{2}{L}$ at $\theta=0^{\circ}$. In engineering, resolution is often calculated to be $\frac{2}{L}$.
\end{itemize}
It is worth noting that those concepts are analyzed through linear signal processing. They provide basic benchmarks for the LFMCW radar, and those criteria could be improved via nonlinear signal processing methods.

%
\subsection{Multi-targets Tracking Model}
The constant velocity (CV) motion model stands out as the most widely employed motion model due to its simplicity and its ability to adequately describe the movements of entities such as pedestrians.

Consider a multi-targets scenario in a two dimensional setting. Define the state vector $\mathbf{x}_{k}(t)$ of the $k$th target at the $t$th frame as
\begin{align}
	\mathbf{x}_{k}(t) = [p_{x, k}(t), v_{x, k}(t), p_{y, k}(t), v_{y, k}(t)]^{\rm T} \in {\mathbb R}^4,
\end{align}
where $p_{x, k}(t)$ and $v_{x, k}(t)$ are the position and velocity along the $x$-axis, while $p_{y, k}(t)$ and $v_{y, k}(t)$ are the position and velocity along the $y$-axis. The target state transition model is described as \cite{Barshabook}
\begin{align}\label{TransitionEquation}
\mathbf{x}_{k}(t) = \mathbf{A} \mathbf{x}_{k}(t - 1) + \mathbf{w}_{k}(t),
\end{align}
where
\begin{align}
\mathbf{A} = \begin{bmatrix}
	1 & T & 0 & 0 \\
    0 & 1 & 0 & 0 \\
    0 & 0 & 1 & T \\
    0 & 0 & 0 & 1 \\
\end{bmatrix}
\end{align}
is the state transition matrix, $T$ is the frame interval, $\mathbf{w}_{k}(t)$ denotes the Gaussian noise with zero mean and known covariance matrix ${\mathbf Q}_k(t)$.

Define the $h$th measurement $\mathbf{z}_{h}(t)$ at the  instant $t$ as
\begin{align}\label{meas}
\mathbf{z}_{h}(t) = \left[ p_{x, h}(t), p_{y, h}(t) \right]^{\rm T} \in \mathbb{R}^2,
\end{align}
where $p_{x, h}(t)$ and $p_{y, h}(t)$ are the observed position on the x-axis and y-axis, respectively; $h = 1, \ldots, H(t)$ is the index of measurement, where $H(t)$ is the number of measurements at instant $t$. For a given target (omitting the index of the target), its measurements $\mathbf{z}(t)$ is related to its state $\mathbf{x}(t)$ according to
\begin{align}\label{measmodel}
\mathbf{z}(t) = \mathbf{H} \mathbf{x}(t) + \mathbf{v}(t),
\end{align}
where
\begin{align}
{\mathbf H}=
\begin{bmatrix}
1 & 0 & 0 & 0 \\
0 & 0 & 1 & 0 \\
\end{bmatrix}
\end{align}
is the measurement matrix and $\mathbf{v}(t)$ is the measurement noise following the normal distribution with zero mean and covariance matrix $\mathbf{R}(t)$.

It is important to note that several questions naturally arise. Firstly, the measurements in eq. (\ref{meas}) correspond to the positions of multi-targets along the $x$ and $y$ axes, which need to be extracted in real-time from the baseband measurement model (\ref{radarmeas}). Additionally, the covariance matrix $\mathbf{R}(t)$ in the measurement model (\ref{measmodel}) is unknown and requires estimation. Secondly, efficient data association is crucial for continuous target tracking. This involves associating measurements with existing targets and establishing new tracks as needed. Thirdly, radial velocities of multiple targets are simultaneously estimated from the baseband data. These radial velocities should also be utilized for data association since the radial velocity is closely related to the state of the target. In the subsequent sections, we will address and resolve these challenges step by step.

\section{MNOMP-SPA-KF}
This section develops the MNOMP-SPA-KF for joint multi-targets estimation, detection and tracking. As shown in Fig. \ref{FlowChart}, the MNOMP-SPA-KF consists of three modules, namely the MNOMP module, the SPA module and the KF module. The MNOMP module processes the baseband data ${\mathcal Y}(t)$ to extract the radial distances, radial velocities and the azimuths of the multi-targets,  the  radial distances and the azimuths of the multi-targets are further processed to obtain the measurements $\{{\mathbf z}_h(t)\}_{h=1}^{H(t)}$ and the measurement covariance matrices $\{{\mathbf R}_h(t)\}_{h=1}^{H(t)}$, which are then input to the SPA and the KF module. The SPA performs data association by using the measurements $\{{\mathbf z}_h(t),{\mathbf R}_h(t)\}_{h=1}^{H(t)}$,  the predicted means and covariance matrices of the targets' state $\{\hat{\mathbf x}_k(t|t-1),{\boldsymbol\Sigma}_k(t|t-1)\}_{k=1}^{\hat{K}(t-1)}$, with the radial velocities $\{\hat{v}_h(t)\}_{h=1}^{H(t)}$ as an additional input, and outputs the data association probability $\beta_{k\rightarrow h}(t)$ which is the probability that the $h$th measurement is associated to the $k$th target. With the data association probability $\beta_{k\rightarrow h}(t)$, the measurements $\{{\mathbf z}_h(t)\}_{h=1}^{H(t)}$ and the measurement covariance matrices $\{{\mathbf R}_h(t)\}_{h=1}^{H(t)}$, the posterior means and covariance matrices of the multi-targets at the $(t-1)$th frame as inputs, the KF module generates the final outputs corresponding to the means and covariance matrices of the targets' state. Below we present these details.

\begin{figure}
	\centering
	\includegraphics[width = 140mm]{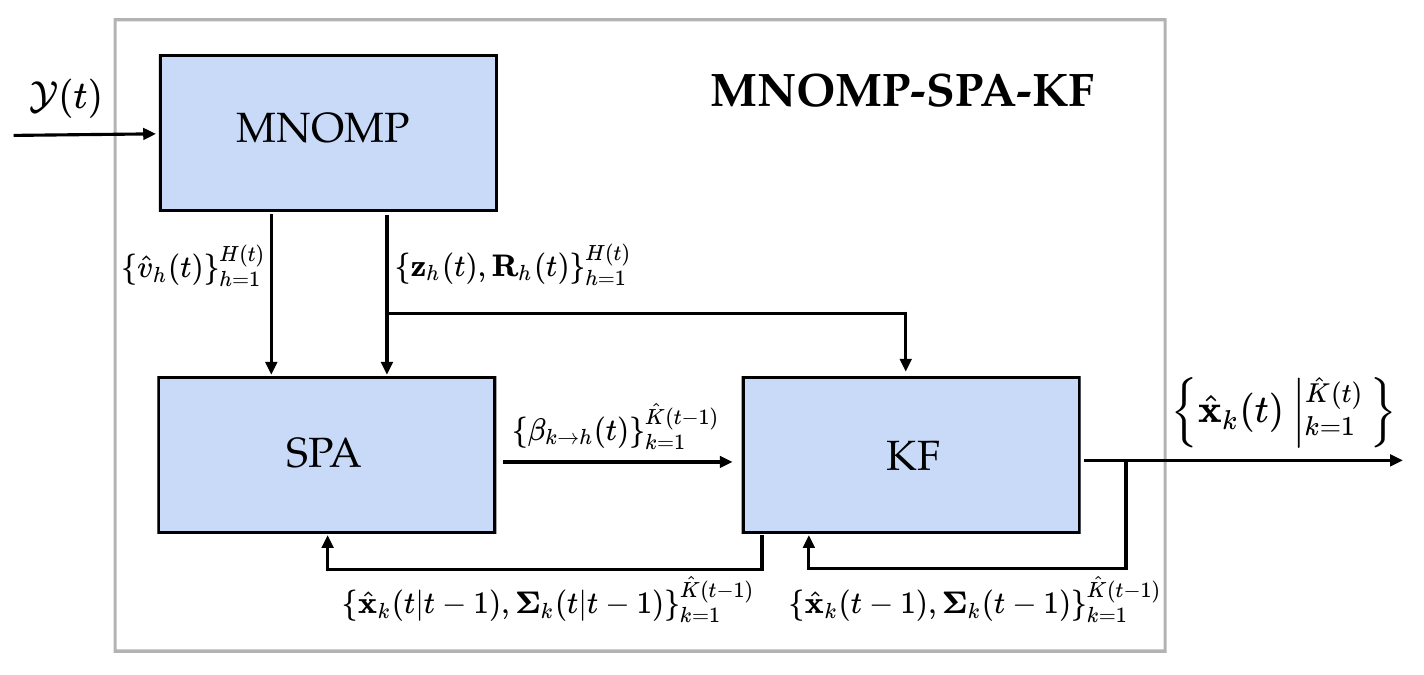}
	\caption{The flow chart of MNOMP-SPA-KF algorithm.}\label{FlowChart}
\end{figure}

\subsection{Superfast 2DMNOMP for Target Sensing}\label{2DMNOMP}

In this subsection, we propose a superfast 2DMNOMP algorithm to extract the radial distances, radial velocities and azimuth angles of multiple targets from the baseband signal (\ref{radarmeas}). Note that the signal model (\ref{radarmeas}) can be formulated as a three dimensional line spectrum estimation (LSE) problem, and the three dimensional CFAR based NOMP (3D-NOMP-CFAR) can be designed to extract the state of multi-targets \cite{NOMP_CFAR}. However, the computation complexity of 3D-NOMP-CFAR is very high due to the multiple implementation of three dimensional FFT \footnote{It is found that 3D-NOMP-CFAR will generate too many false alarms when applied to the real data acquired by the mmWave radar.}. Here we design a low complexity superresolution algorithm with acceptable performance loss. We conduct dimension reduction and formulate the model (\ref{radarmeas}) as  (omitting the time notation $t$)
\begin{align}\label{2dmmv}
&\mathbf{Y}_{l}\left[n, m\right] = \sum_{k = 1}^{K} \gamma_{k, l} {\rm e}^{{\rm j} (n - 1) \omega_{x, k}} {\rm e}^{{\rm j} (m - 1) \omega_{y, k}} + \mathcal{\epsilon}_{n, m, l},
\end{align}
where $n = 1, \ldots, N$, $m = 1, \ldots, M$, $l = 1, \ldots, L$,
\begin{align}\label{gammakldef}
	\gamma_{k, l} = \gamma_k{\rm e}^{{\rm j} (l - 1) \omega_{z, k}},
\end{align}
\begin{subequations}\label{FreqToState}
	\begin{align}
		{\omega}_{x, k} &= \frac{4 \pi}{c} \mu r_k T_{\text s} = 2 \pi \frac{r_k}{r_{\rm max}}, \label{fast time omega} \\
		{\omega}_{y, k} &= \frac{4 \pi}{c} f_c v_k {T}_{\text{r}} = \pi \frac{v_k}{v_{\rm max}}, \label{slow time omega} \\
		{\omega}_{z, k} &= \frac{2 \pi}{c} f_c d \sin \theta_k = \pi \sin \theta_k. \label{spatial omega}
	\end{align}
\end{subequations}
Define the array manifold vector
\begin{align}
	{\mathbf a}_N(\omega)=[1,{\rm e}^{{\rm j}\omega},\cdots,{\rm e}^{{\rm j}(N-1)\omega}]^{\rm T},
\end{align}
model (\ref{2dmmv}) can be formulated as a matrix-vector model
\begin{align}\label{2dmmvmvform}
	&\mathbf{Y}_{l} = \sum_{k = 1}^{K} \gamma_{k, l} 	{\mathbf a}_N(\omega_{x,k}){\mathbf a}_M^{\rm T}(\omega_{y,k}) + {\boldsymbol\epsilon}_{l}.
\end{align}

Consequently, model (\ref{2dmmv}) is a two dimensional LSE model with snapshots $L$. Note that there exists a one to one correspondence between the target state $(r_k,v_k,\theta_k)$ and the frequencies $({\omega}_{x, k},{\omega}_{y, k},{\omega}_{z, k})$. For model (\ref{2dmmv}) and dropping the constraint (\ref{gammakldef}), we extract $({\omega}_{x, k},{\omega}_{y, k})$ to obtain $(r_k,v_k)$. Then we obtain $\gamma_{k, l}$ to estimate the azimuth angle $\theta_k$ by taking the constraint (\ref{gammakldef}) into consideration.

For the data matrix set $\{\mathbf{Y}_l\}_{l=1}^L$ with $L$ snapshots, we can employ 2D-MNOMP-CFAR algorithm to estimate angular frequency $\omega_{x, k}$, $\omega_{y, k}$ and gain vector ${\mathbf g}_{k}=[\gamma_{k, 1},\gamma_{k, 2},\cdots,\gamma_{k, L}]^{\rm T}\in {\mathbb C}^L$. For further details about 2D-MNOMP-CFAR, please refer to \cite{NOMP_CFAR}. To further reduce the computation complexity of 2D-MNOMP-CFAR, we have streamlined the algorithm as follows:
\begin{itemize}
    \item The conventional 2D-MNOMP-CFAR consists of two steps: Initialization and Detection. The initialization stage is to provide enough number of frequency candidates and can be viewed as the forward step. The detection step is to remove the most unlikely frequency based on the CFAR criterion and can be regarded as the backward step. Once the most unlikely frequency is removed, the amplitudes and frequencies of the remaining targets are jointly refined. The backward step is time consuming. A natural way is to directly discarding the backward step and the forward step stops based on the CFAR criterion. However, this may introduce a potential issue known as ``target masking'', i.e., one or more targets are situated among the training cells around the cell under test (CUT), causing an increase in the estimated noise variance and potentially leading to missed detections \cite{FundamentalsRadarSP}. To overcome this issue, we introduce a hyperparameter $K_{\rm invalid}$.
    When detecting the target, its detection statistic may not surpass its threshold due to the ``target masking'' effect. Therefore, the target that does not surpass its threshold will also be added into the candidate target set.  The algorithm stops when the number of the consecutive targets whose detection statistic does not exceed its threshold is $K_{\rm invalid}$. When the algorithm stops, the last $K_{\rm invalid}$ targets will be removed from the candidate target set, yielding the final target detection set. For example, suppose $K_{\rm invalid}=2$. In the current and the next iterations, if the consecutive two targets do not surpass their thresholds, the algorithm stops and the final detected targets do not include the two targets. If in the next iteration, the second target's detection statistic surpasses its threshold, the second target is declared to be detected, then the first target also becomes a valid target.

    \item We employ 2D-FFT only once to obtain ${\mathcal F}_{\omega}({\mathbf Y}_l)$  at the beginning of algorithm. Let ${\mathcal F}_{\omega_x,\omega_y}(\cdot)$ denote the two dimensional FFT evaluated at frequency $(\omega_x,\omega_y)$. Let $\mathcal{K} = \left\{ \left( \hat{\mathbf g}_k, \hat{\omega}_{x, k}, \hat{\omega}_{y, k} \right), k = 1, \ldots, \hat{K} \right\}$ denote the set of estimated parameters, where $\hat{\omega}_{x, k}, \hat{\omega}_{y, k}$ is the angular frequencies estimates, $\hat{\mathbf g}_k = \left[ \hat{\gamma}_{k, 1}, \ldots, \hat{\gamma}_{k, L} \right]^{\rm T}$ is the estimation of gain vector, $\hat{K}$ is the number of detected targets. The residue ${\mathbf{Y}}_{{\rm r}, l}^{\mathcal{K}}$ is
    \begin{align}\label{timeResidual}
        & {\mathbf{Y}}_{{\rm r}, l}^{\mathcal{K}} = {\mathbf{Y}}_l - \sum_{k = 1}^{\hat{K}} \hat{\gamma}_{k, l} {\mathbf{a}}_{N} \left( \hat{\omega}_{x, k} \right) {\mathbf{a}}_{M}^{\rm T} \left( \hat{\omega}_{y, k} \right). \nonumber \\
        & l = 1, \ldots, L.
    \end{align}
    To perform new signal detection, the spectrum of the residue ${\mathcal F}_{\omega_x,\omega_y}({\mathbf{Y}}_{{\rm r}, l}^{\mathcal{K}})$ is calculated as 
    \begin{align}\label{timeResidual}
        {\mathcal F}_{\omega_x,\omega_y}({\mathbf{Y}}_{{\rm r}, l}^{\mathcal{K}})
		&= {\mathcal F}_{\omega_x,\omega_y}({\mathbf Y}_l) - \sum_{k = 1}^{\hat{K}} \hat{\gamma}_{k, l}
		{\mathcal F}_{\omega_x,\omega_y}({\mathbf{a}}_{N} \left( \hat{\omega}_{x, k} \right) {\mathbf{a}}_{M}^{\rm T} \left( \hat{\omega}_{y, k} \right)), \nonumber\\
        &= {\mathcal F}_{\omega_x,\omega_y}({\mathbf Y}_l)
		- \sum_{k = 1}^{\hat{K}} \hat{\gamma}_{k, l}
		\frac{1 - {\rm e}^{- {\rm j} N \left( \omega_x - \hat{\omega}_{x, k} \right)}}{1 - {\rm e}^{- {\rm j} \left( \omega_x - \hat{\omega}_{x,k} \right)}}
		\frac{1 - {\rm e}^{- {\rm j} M \left( \omega_y - \hat{\omega}_{y,k} \right)}}{1 - {\rm e}^{- {\rm j} \left( \omega_y - \hat{\omega}_{y,k} \right)}}, \nonumber \\
        & l = 1, \ldots, L,
    \end{align}
demonstrating that ${\mathcal F}_{\omega_x,\omega_y}({\mathbf{Y}}_{{\rm r}, l}^{\mathcal{K}})$ can be calculated efficiently.
\end{itemize}

For the 2D-MNOMP-CFAR algorithm, it outputs the estimates $\{\hat{\mathbf g}_k,\hat{\boldsymbol\omega}_x,\hat{\boldsymbol\omega}_y,\hat{K},\hat{\sigma}_b^2\}$ with $\hat{\sigma}_b^2$ denoting the noise variance estimates. We then estimate the azimuth via least squares method by taking the constraint (\ref{gammakldef}) into account, i.e.,
\begin{align}
	\hat{\omega}_{z, k}, \hat{\gamma}_k = \mathop{\arg \min} \limits_{\omega_{z, k}, \gamma_k} \left\| \gamma_k \mathbf{a}_{L} \left( \omega_{z, k} \right) - \hat{\mathbf g}_k \right\|_2^2.
\end{align}
As a result, the final estimates $\{\hat{K},\hat{\sigma}_b^2,(\hat{\gamma}_k,\hat{\omega}_{x,k},\hat{\omega}_{y,k},\hat{\omega}_{z,k}),k=1,2,\cdots,\hat{K}\}$ are obtained, and the targets' state can be extracted through frequencies via eq. (\ref{FreqToState}).

Note that we have used model (\ref{measmodel}) to perform target tracking. While the estimates we have obtained from the baseband data are radial distances, radial velocities and azimuths, we need to use these to obtain a pseudo measurement model similar to (\ref{measmodel}). As shown in Fig. \ref{Scenario}, the target's radial distance $r$, radial velocity $v$ and azimuth angle $\theta$ are related to the target's state via
\begin{subequations}
\begin{align}
r &= \sqrt{p_x^2 + p_y^2}, \label{pxpyr}\\
\theta &= \arctan \left( p_x / p_y \right), \label{pxpytheta}\\
\label{RadialVelocity} v &= v_x \sin \theta + v_y \cos \theta.
\end{align}
\end{subequations}
Under regularity conditions, it is expected that the distribution of the maximum likelihood estimator asymptotically approaches to a Gaussian distribution with mean equal to the true value and the covariance matrix equal to the CRB. Therefore, given the measurement model (\ref{2dmmv}) and the constraint (\ref{pxpyr}) and (\ref{pxpytheta}), we obtain the CRB ${\rm CRB}([p_x,p_y]^{\rm T})$. For simplicity, we assume a single target scenario and the results are summarized in the following Proposition \ref{PropositionCRBpxpy}.
\begin{proposition}\label{PropositionCRBpxpy}
	For the single target scenario, the ${\rm CRB}([p_x,p_y]^{\rm T})$ is
	\begin{small}
		\begin{align}\label{CRBpxpy}
			{\rm CRB}([p_x,p_y]^{\rm T}) = \frac{6 \sigma^2}{\pi^2 NML |\gamma|^2} \begin{bmatrix}
				\frac{r_{\rm max}^2 \sin^2\theta}{4 (N^2 - 1)} + \frac{r^2 }{L^2 - 1} & \left[\frac{r_{\rm max}^2}{4 (N^2 - 1)} - \frac{r^2}{(L^2 - 1) \cos^2 \theta}\right] \sin\theta \cos\theta \\
				\left[\frac{r_{\rm max}^2}{4 (N^2 - 1)} - \frac{r^2}{(L^2 - 1) \cos^2 \theta}\right] \sin\theta \cos\theta & \frac{r_{\rm max}^2 \cos^2\theta}{4 (N^2 - 1)} + \frac{r^2 \tan^2 \theta}{L^2 - 1} \\
			\end{bmatrix}.
		\end{align}
	\end{small}
\end{proposition}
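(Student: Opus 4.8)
The plan is to treat the single-target version of model (\ref{2dmmv}) as one complex exponential observed on a separable $N\times M\times L$ grid, compute the Fisher information matrix (FIM) for the three digital frequencies, invert it to obtain $\mathrm{CRB}(\omega_x)$ and $\mathrm{CRB}(\omega_z)$, and finally push these through the deterministic change of variables $(\omega_x,\omega_z)\mapsto(r,\theta)\mapsto(p_x,p_y)$ via the standard reparametrization rule for the CRB. The key structural observation is that $[p_x,p_y]^{\mathrm T}$ depends only on the range $r$ (hence on $\omega_x$) and on the azimuth $\theta$ (hence on $\omega_z$) through (\ref{FreqToState}), (\ref{pxpyr}) and (\ref{pxpytheta}); the velocity frequency $\omega_y$ enters neither coordinate, so I expect it to drop out of the position CRB once the FIM is shown to be diagonal.

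First I would assemble the real parameter vector $[|\gamma|,\phi,\omega_x,\omega_y,\omega_z]$ with $\gamma=|\gamma|\mathrm e^{\mathrm j\phi}$, stack the noiseless mean $\mu_{n,m,l}=\gamma\,\mathrm e^{\mathrm j(n-1)\omega_x}\mathrm e^{\mathrm j(m-1)\omega_y}\mathrm e^{\mathrm j(l-1)\omega_z}$, and apply the deterministic-signal-in-complex-Gaussian-noise formula $[\mathbf I]_{ij}=\frac{2}{\sigma^2}\mathrm{Re}\{\partial_i\boldsymbol\mu^{\mathrm H}\partial_j\boldsymbol\mu\}$. Since $|\mu_{n,m,l}|^2=|\gamma|^2$ is index-independent and the three phase factors are separable, every entry factors into products of the elementary sums $\sum_n(n-1)=N(N-1)/2$ and $\sum_n(n-1)^2=(N-1)N(2N-1)/6$ together with their $m$- and $l$-analogues. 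A short computation shows the amplitude $|\gamma|$ decouples from everything (its cross terms with $\phi$ and with each frequency are purely imaginary, hence have vanishing real part), so the only nuisance coupling is through the phase $\phi$.

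Next I would eliminate $\phi$ through the Schur complement of the FIM, which is the crux of the argument. The correction $-I_{\omega_i\phi}I_{\phi\phi}^{-1}I_{\phi\omega_j}$ subtracts exactly the ``DC'' part of each index: it replaces the raw second moment $\sum_n(n-1)^2$ by the centered one $\sum_n\big((n-1)-\tfrac{N-1}{2}\big)^2=N(N^2-1)/12$, and simultaneously annihilates every cross-frequency term (for instance $I^{\mathrm{eff}}_{\omega_x\omega_z}$ collapses to zero because its two contributions cancel exactly). Hence the effective FIM for $(\omega_x,\omega_y,\omega_z)$ is diagonal, yielding $\mathrm{CRB}(\omega_x)=\frac{6\sigma^2}{|\gamma|^2 MLN(N^2-1)}$ and $\mathrm{CRB}(\omega_z)=\frac{6\sigma^2}{|\gamma|^2 NML(L^2-1)}$, with $\omega_y$ decoupled and therefore irrelevant to the position.

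Finally I would propagate these through the reparametrizations. From (\ref{FreqToState}), $r=\frac{r_{\mathrm{max}}}{2\pi}\omega_x$ gives $\mathrm{CRB}(r)=\big(\frac{r_{\mathrm{max}}}{2\pi}\big)^2\mathrm{CRB}(\omega_x)$, while $\theta=\arcsin(\omega_z/\pi)$ gives $\mathrm{CRB}(\theta)=\frac{1}{\pi^2\cos^2\theta}\mathrm{CRB}(\omega_z)$, and the $(r,\theta)$ CRB stays diagonal because the frequency FIM is diagonal. Inverting (\ref{pxpyr})--(\ref{pxpytheta}) as $p_x=r\sin\theta,\ p_y=r\cos\theta$, I would form the Jacobian $\mathbf J=\left[\begin{smallmatrix}\sin\theta & r\cos\theta\\ \cos\theta & -r\sin\theta\end{smallmatrix}\right]$ and apply $\mathrm{CRB}([p_x,p_y]^{\mathrm T})=\mathbf J\,\mathrm{diag}(\mathrm{CRB}(r),\mathrm{CRB}(\theta))\,\mathbf J^{\mathrm T}$. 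Multiplying out the $2\times2$ product and factoring the common prefactor $\frac{6\sigma^2}{\pi^2 NML|\gamma|^2}$ reproduces (\ref{CRBpxpy}). The only genuine obstacle is the bookkeeping in the Schur-complement step — verifying that the phase projection both recenters the diagonal entries and kills the off-diagonal ones; once that diagonalization is established, the remaining change-of-variables algebra is routine.
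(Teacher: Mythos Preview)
Your proposal is correct and follows essentially the same route as the paper: compute the FIM for the full real parameter vector in the complex-Gaussian model, obtain a diagonal CRB for $(\omega_x,\omega_y,\omega_z)$ after eliminating the nuisance amplitude and phase, and then push through the two successive Jacobian transformations $(\omega_x,\omega_z)\to(r,\theta)\to(p_x,p_y)$. The only cosmetic difference is that the paper inverts the $4\times4$ block $\boldsymbol\Phi$ explicitly (and reads off that $\boldsymbol\Phi^{-1}$ is diagonal in the frequency subblock), whereas you reach the same diagonalization via the Schur complement on the phase row; both computations are equivalent.
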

\begin{proof}
	The proof is postponed into Appendix \ref{CRBpxy}.
\end{proof}
According to \cite{Kayest}, it is expected that the maximum likelihood estimator $[\hat{p}_x^{\rm ML},\hat{p}_y^{\rm ML}]^{\rm T}$ is asymptotically unbiased and approaches to the CRB, i.e.,
\begin{align}
    [\hat{p}_x^{\rm ML},\hat{p}_y^{\rm ML}]^{\rm T}\stackrel{a}{\sim} {\mathcal N}([p_x,p_y]^{\rm T},{\rm CRB}([p_x,p_y]^{\rm T})),
    \label{nomalpxy}
\end{align}
where $\stackrel{a}{\sim}$ stands for asymptotically distributed and ${\rm CRB}([p_x,p_y]^{\rm T})$ is defined in Proposition \ref{PropositionCRBpxpy}. Note that ${\rm CRB}([p_x,p_y]^{\rm T})$ depends on the true positions of the targets which are unknown. In general, when the maximum likelihood estimator approaches to the CRB, $[\hat{p}_x^{\rm ML},\hat{p}_y^{\rm ML}]^{\rm T}$ is very close to $[p_x,p_y]^{\rm T}$, and ${\rm CRB}([\hat{p}_x^{\rm ML},\hat{p}_y^{\rm ML}]^{\rm T})$ approximates ${\rm CRB}([p_x,p_y]^{\rm T})$ well. Consequently, a pseudo measurement 
\begin{align}\label{pseudomeasurement}
    \mathbf{z}_{{\rm ML}}(t)=
    \left[\begin{array}{c}
        {p}_x(t) \\
        {p}_y(t) 
    \end{array}\right]+{\mathbf v}_{{\rm ML}}(t) 
\end{align}
is obtained, where the measurements $\mathbf{z}(t)$ is defined as 
\begin{align}
	\mathbf{z}_{{\rm ML}}(t)=[\hat{p}_x^{\rm ML}(t),\hat{p}_y^{\rm ML}(t)]^{\rm T},
    \label{z_t}
\end{align}
${\mathbf v}_{{\rm ML}}(t)$ is Gaussian distributed satisfying ${\mathbf v}_{{\rm ML}}(t)\sim {\mathcal {CN}}({\mathbf 0},{\rm CRB}([\hat{p}_x^{\rm ML},\hat{p}_y^{\rm ML}]^{\rm T}))$. Since we use the low complexity suboptimal 2D-MNOMP algorithm  whose performance is rather well, we replace the ML estimator with the estimates $[\hat{p}_x(t),\hat{p}_y(t)]^{\rm T}$ obtained via the 2D-MNOMP to yield the following pseudo measurement model 
\begin{align}\label{pseudomeasurementmnomp}
    \mathbf{z}(t)=
    \left[\begin{array}{c}
        {p}_x(t) \\
        {p}_y(t) 
    \end{array}\right]+{\mathbf v}(t),
\end{align}
where 
\begin{align}
	\mathbf{z}(t)=[\hat{p}_x(t),\hat{p}_y(t)]^{\rm T},
    \label{z_t}
\end{align}
 $\mathbf v (t) \sim \mathcal N (\mathbf{0},{\mathbf R}(t))$, and ${\mathbf R}(t)$ is a scaled ${\rm CRB}([p_x,p_y]^{\rm T})$ evaluated at $[\hat{p}_x(t),\hat{p}_y(t)]^{\rm T}$, i.e., ${\mathbf R}(t)=\kappa {\rm CRB}([\hat{p}_x(t),\hat{p}_y(t)]^{\rm T})$ and $\kappa>1$.
\begin{figure}
    \centering
    \subfigure[]{\includegraphics[width = 80mm]{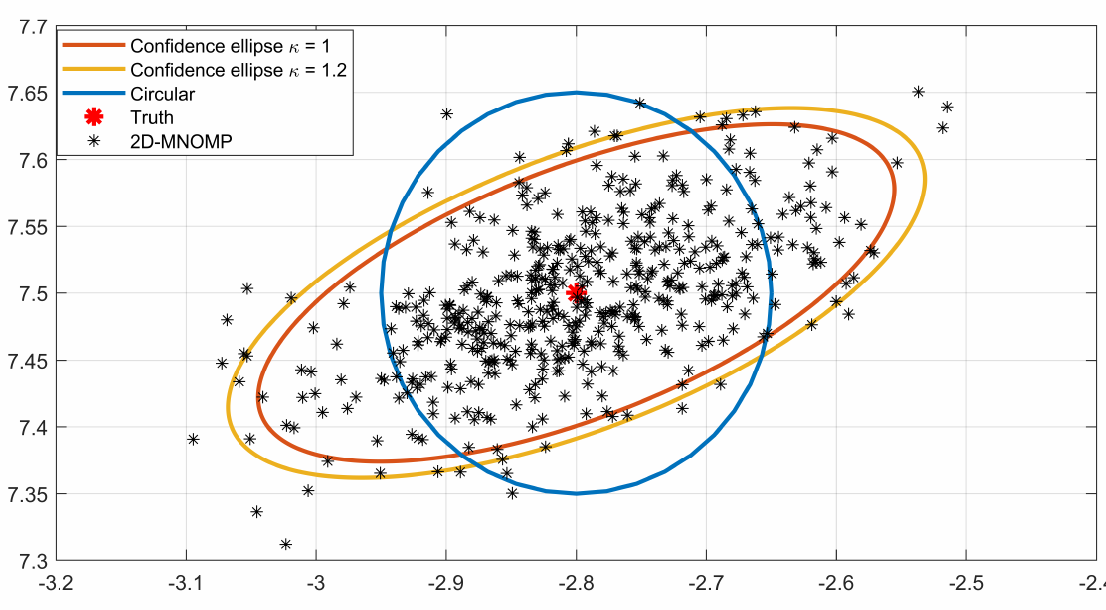}\label{ell1}}
    \subfigure[]{\includegraphics[width = 80mm]{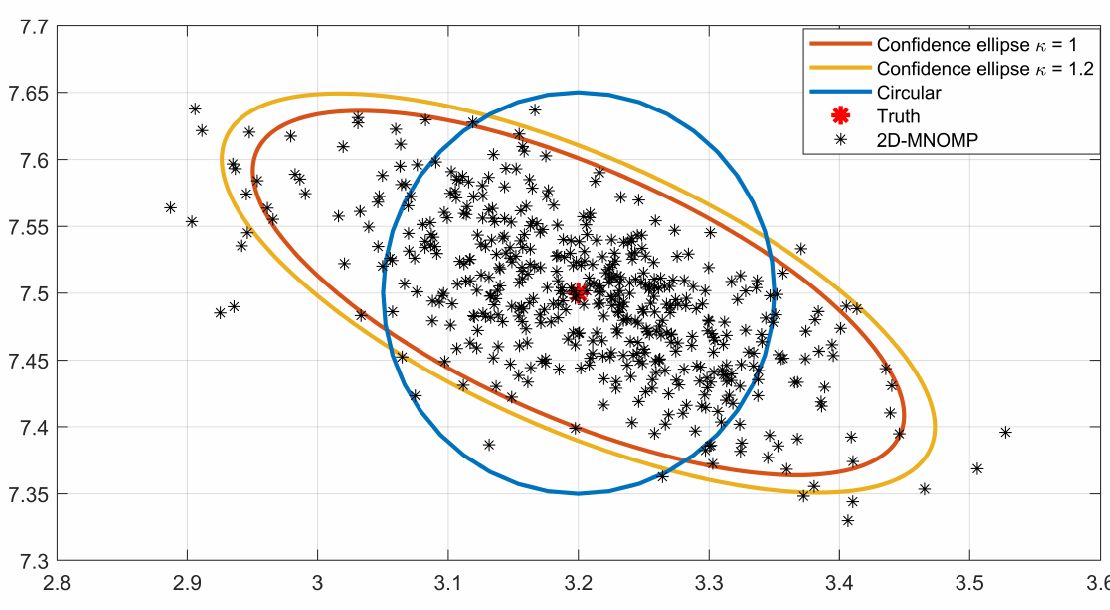}\label{ell2}}
    \caption{The estimates of weak target in multitargets scene output by 2D-MNOMP and the confidence ellipse evaluated by the CRB: (a) The position of the target is ${\mathbf p}=[-2.8,7.5]^{\rm T}$; (b) The position of the target is ${\mathbf p}=[3.2,7.5]^{\rm T}$.} \label{confidence}
\end{figure}

We perform two numerical simulations to validate the good approximation of 2D-MNOMP. The scenario consists of five strong targets with an integrated SNR of 25 dB and one weak target with an integrated SNR of 18 dB. The confidence ellipse is plotted using the integrated SNR of the weak target.
The probability of gate $P_G$ is set as $P_G = 0.95$, where the definition of $P_G$ is below eq. (\ref{2DGate}). After conducting 500 Monte Carlo (MC) trials, we present the measurement results and the corresponding confidence ellipse in Fig. \ref{confidence}, where Fig. \ref{ell1} and Fig. \ref{ell2} depict the confidence ellipses corresponding to the target located at positions $[-2.8,7.5]^{\rm T}$ and $[3.2,7.5]^{\rm T}$, respectively. A simple approach would be to represent the measurement distribution using a circle, where the covariance of the measurements \(\mathbf{z}(t)\) in (\ref{pseudomeasurement}) is modeled as a scaled identity matrix. The radius of the circle is calculated as the average of the major and minor axes of the ellipse. However, as shown in Fig. \ref{confidence}, this simplification does not accurately capture the distribution of the estimates of 2D-MNOMP. In contrast, $94\%$ and $95\%$ of the estimates fall within the confidence ellipse with $\kappa=1$ and $\kappa=1.2$, respectively. Note that the 2D-MNOMP method closely approaches the CRB of a single target, and the CRB information can be incorporated into the data association process during multitarget tracking, thereby enhancing the overall algorithm's performance.


\subsection{SPA}\label{SPAsubsection}
The measurements in the $t$th frame are described by set $\mathbf{Z}(t) = \left\{ \mathbf{z}_{h}(t) \right\}_{h = 1}^{H(t)}$.
The association between measurements and targets can be described by the data association vector $\mathbf{w}(t) = \left[ w_{1}(t), \cdots, w_{K(t)}(t) \right]^{\rm T}$ \cite{WilliamsTAES}, and $w_{k}(t)$ is
\begin{align}
w_{k}(t) \triangleq
\begin{cases}
h \in \left\{1, \cdots, H(t) \right\},      & \text{if at the $t$th frame, target $k$ generates measurement $h$}.\\
0,     & \text{if at the $t$th frame, target $k$ does not generate a measurement.}
\end{cases}
\end{align}
We assume that one target only generates one measurement, which can be described as the following indicator function
\begin{align}\label{constaintorg0}
\psi(\mathbf{w}(t)) \triangleq \begin{cases}
	0, & \text{$\exists k, k' \in \{1, \cdots, K(t)\}$ such that $k \neq k'$ and $w_{k}(t) = w_{k'}(t) \neq 0$}  \\
	1, & \text{otherwise},
\end{cases}
\end{align}
i.e., one measurement can not be originated from the same target. It is hard to obtain the association probability between the measurements and the targets through (\ref{constaintorg0}). Here we  introduce an additional DA vector $\mathbf{b}(t) = \left[ b_{1}(t), \cdots, b_{H(t)}(t) \right]^{\rm T}$ for measurements, i.e.,
\begin{align}
	b_{h}(t) \triangleq
	\begin{cases}
		k \in \left\{1, \cdots, K(t) \right\}, & \text{if at the $t$th frame, measurement $h$ is generated by target $k$}\\
		0, & \text{if at the $t$th frame, measurement $h$ is not generated by a target.}
	\end{cases}
\end{align}
Note that the assumption described by $\psi(\mathbf{w}(t))$ can be expressed by the indicator function
\begin{align}
	\phi \left(\mathbf{w}(t), \mathbf{b}(t) \right) \triangleq \prod_{k = 1}^{K(t)} \prod_{h = 1}^{H(t)} \varphi_{k, h} \left( w_{k}(t), b_h(t) \right),
\end{align}
where
\begin{align}
	\varphi_{k, h} \left( w_{k}(t), b_h(t) \right) \triangleq \begin{cases}
		0, & w_{k}(t) = h, b_h(t) \neq k ~\text{or}~ b_h(t) = k, w_{k}(t) \neq h, \\
		1, & \text{otherwise.}
	\end{cases}
\end{align}
In addition, the predicted state prior PDF $p_{t|t-1}(\mathbf{x}_{k})$ of $\mathbf{x}_{k}$ is
\begin{align}\label{predictedprior}
	p_{t|t-1}(\mathbf{x}_{k}) = \int f(x_{k}(t) \left| x_{k}(t - 1) \right.) p(\mathbf{x}_{k}(t-1)) {\rm d} \mathbf{x}_{k}(t-1),
\end{align}
where $f(x_{k}(t) | x_{k}(t - 1))$ is defined by the transition model given by eq.(\ref{TransitionEquation}) and $p(\mathbf{x}_{k}(t-1))$ is the posterior density of $\mathbf{x}_{k}(t-1)$.

The conditional PMF $q \left({w}_k(t)|\mathbf{x}_k(t) ; \mathbf{Z}(t) \right)$ is given by
\begin{align}
	q \left({w}_k(t)|\mathbf{x}_k(t) ; \mathbf{Z}(t) \right) =\begin{cases}
		{\rm P}_{\rm D}\left(\mathbf{x}_k(t) \right) f\left(\mathbf{z}_{{w}_k(t)} \left| \mathbf{x}_k(t) \right. \right), & {w}_k(t) \in \{1, \cdots, {H}(t)\}, \\
		1 - {\rm P}_{\rm D}\left(\mathbf{x}_k(t) \right), & {w}_k(t) = 0,
	\end{cases}
\end{align}
where ${\rm P}_{\rm D}\left(\mathbf{x}_k(t) \right)$ is the detected probability of the $k$th target and $f\left(\mathbf{z}_{{w}_k(t)} \left| \mathbf{x}_k(t) \right. \right)$ is determined by the measurement model shown in eq.(\ref{measmodel}).
Under the linear Gaussian assumption, the mean value of $f\left(\mathbf{z}_{{w}_k(t)} \left| \mathbf{x}_k(t) \right. \right)$ is known as $\mathbf{H} \mathbf{x}_k(t)$ and its covariance can be calculated as
\begin{align}
	\mathbf{S}_k(t | t - 1) = \mathbf{H} \boldsymbol{\Sigma}_k(t | t - 1) \mathbf{H}^{\rm T} + \mathbf{R}_k(t).
\end{align}

The message $\beta_{[0] k \rightarrow {w}_k } \left( t \right)$ transmitted from the factor node $q \left({w}_k(t)|\mathbf{x}_k(t) ; \mathbf{Z}(t) \right)$ to the variable node ${w}_k(t)$ is
\begin{align}
	\beta_{[0] k \rightarrow {w}_k } \left( t \right) = \int q \left(\mathbf{x}_k(t), {w}_k(t) ; \mathbf{Z}(t) \right) p_{t|t-1}(\mathbf{x}_{k}) \mathrm{d} \mathbf{x}_{k}(t).
\end{align}
Similarly, the distribution of $\mathbf{b}(t)$ is 
\begin{align}\label{}
	h\left(b_{h}(t); \mathbf{z}_{h}(t) \right) = \begin{cases}
		1, & b_{h}(t) \in \left\{ 1, \ldots, K(t) \right\}, \\
		\mu_c f_c\left( \mathbf{z}_{h}(t) \right), & b_{h}(t) = 0,
	\end{cases}
\end{align}
where $f_c\left( \mathbf{z}_{h}(t) \right)$ is the location distribution function of clutter. We let $\xi_{[0] h \rightarrow b_{h}}(t) = h\left(b_{h}(t); \mathbf{z}_{h}(t) \right)$ be the initialized association weights. In summary, the factor graph \cite{SPM, FCref} is shown in Fig. \ref{FactorGraph}.

\begin{figure}
    \centering
    \includegraphics[width = 120mm]{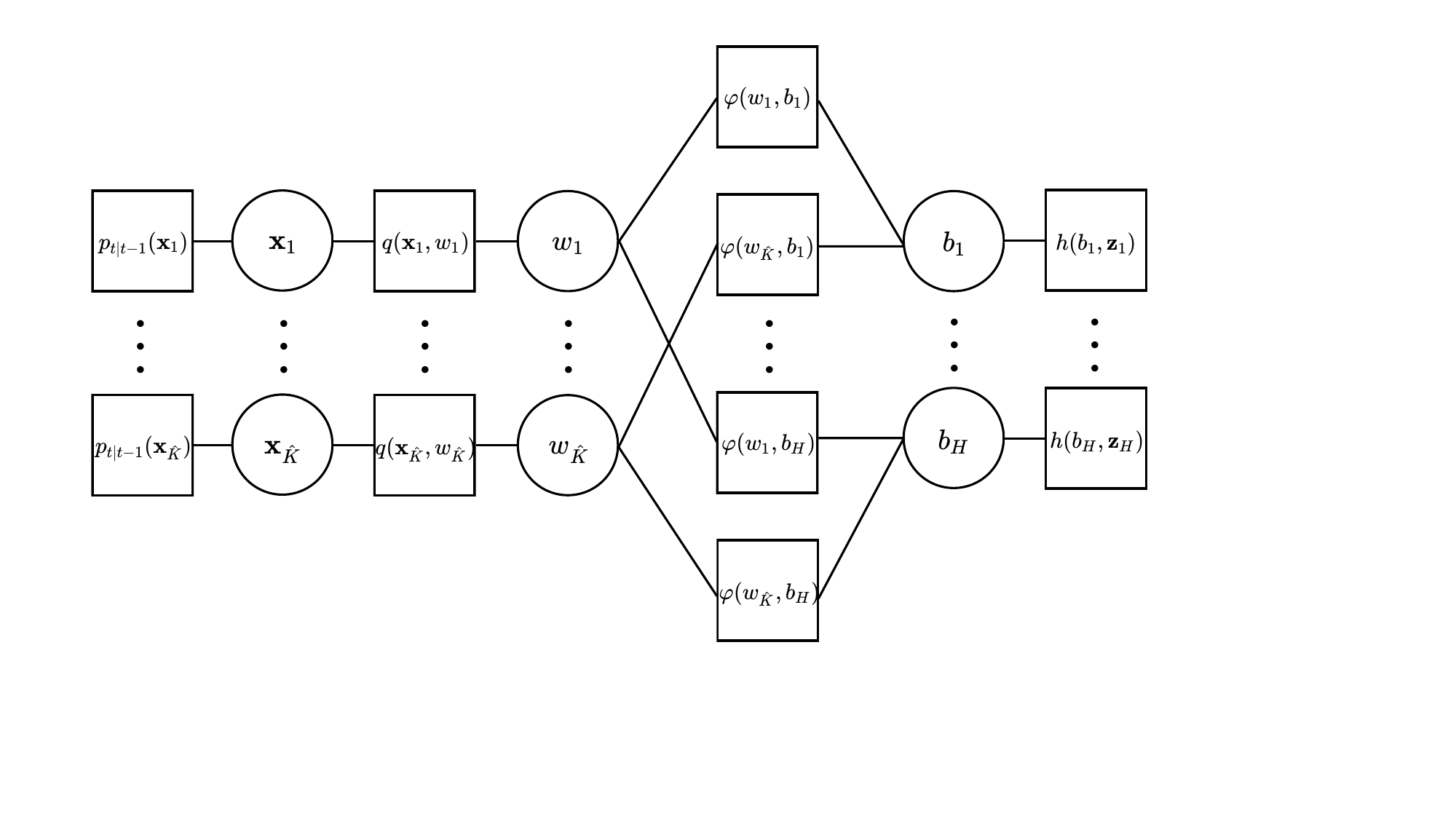}
    \caption{Factor graph for multi-targets data association in a single mmWave radar.} \label{FactorGraph}
\end{figure}




In order to decrease the complexity, we only consider the measurements which belong to the validation region calculated by the target prior distribution $p_{t|t-1}(\mathbf{x}_{k})$ (\ref{predictedprior}) and $P_G$.
In other words, the valid measurement indexes obey $h \in \mathbb{S}_{k}(t)$, where the valid measurement set $\mathbb{S}_{k}(t)$ is 
\begin{align}\label{ValidSet}
	{\mathbb S}_{k}(t) = \left\{ h \left| \left( \left[ p_{x,h}, p_{y,h}, v_h \right]^{\rm T} - \mathbf{H}_{k}^{\prime} {\mathbf x}_k(t | t - 1) \right)^{\rm T} \left( \mathbf{S}_{k}^{\prime} \right)^{-1} \left( \left[ p_{x,h}, p_{y,h}, v_h \right]^{\rm T} - \mathbf{H}_{k}^{\prime} {\mathbf x}_k(t | t - 1) \right) \le d_G \right. \right\},
\end{align}
where
\begin{subequations}
	\begin{align}
	\mathbf{H}_{k}^{\prime} &=
	\begin{bmatrix}
	1 & 0 & 0 & 0 \\
	0 & 0 & 1 & 0 \\
	0 & \sin \theta_{k} & 0 & \cos \theta_{k}
	\end{bmatrix}, \\
    \mathbf{S}_k^{\prime} &= \mathbf{H}_{k}^{\prime} \boldsymbol{\Sigma}_k(t | t - 1) \mathbf{H}_{k}^{\prime \rm T} + \mathbf{R}_k^{\prime}(t), \label{covarofSkp} \\
	\mathbf{R}_k^{\prime}(t) &= \kappa{\rm CRB}([p_x, p_y, v]^{\rm T})\big|_{[p_x, p_y, v]=[\hat{p}_x, \hat{p}_y, \hat{v}]},
\end{align}
\end{subequations}
${\rm CRB}([p_x, p_y, v]^{\rm T})$ is given by (\ref{CRBpxpyv}), and $d_G = F_{\chi^2(3)}^{-1}(P_G)$. The deduction details of ${\mathbb S}_k(t)$ and $d_{G}$ are shown in Appendix \ref{DofVR}.
The initialized association of invalid targets will be set as $0$. Traditional SPA employ 2D-Gate to perform data association, as specified in eq. (\ref{2DGate}) in Appendix \ref{DofVR}. Here we employ the radial velocity information to perform data association, which suppresses the false alarms in the real experiments shown later.



The message passed between nodes $w_k(t)$ and $b_h(t)$ can be shown as
\begin{subequations}
\begin{align}\label{}
&\delta_{[i](k \rightarrow h)} = \frac{\beta_{[0] k \rightarrow h}(t)}{\beta_{ [0] k \rightarrow 0}(t) + \sum_{{h^{\prime} \neq h} } \beta_{[0] k \rightarrow h^{\prime}}(t) v_{[i]\left(h^{\prime} \rightarrow k\right)}} \\
&v_{[i](h \rightarrow k)} =  \frac{ \xi_{[0]h \rightarrow k}(t) }{\xi_{[0] h \rightarrow 0}(t) + \sum_{k^{\prime} \neq k} \xi_{[0] h \rightarrow k^{\prime}}(t) \delta_{[i - 1]\left(k^{\prime} \rightarrow h\right)}}
\end{align}
\end{subequations}
for $k = 1, \cdots, K(t)$ and $h = 1, \cdots, H(t)$, $i$ is the index of iteration. For this iterative algorithm we initialize $\delta_{[0]\left(k \rightarrow h \right)}$ as $\delta_{[0]\left(k \rightarrow h \right)} = { \beta_{k \rightarrow h}(t) } / {\beta_{k \rightarrow 0}(t)}$.
And the association probabilities of $\mathbf{w}_{k}^t$ and $\mathbf{b}_{h}^t$ can be shown as
\begin{subequations}
\begin{align}
	\beta_{k \rightarrow h}(t) &= \frac{ \beta_{[0] k \rightarrow h}(t) v_{[i]\left(h \rightarrow k\right)}}{\beta_{[0] k \rightarrow 0}(t) + \sum_{{h^{\prime} = 1} }^{H(t)} \beta_{[0] k \rightarrow h^{\prime}}(t) v_{[i]\left(h^{\prime} \rightarrow k\right)}}, \\
	\xi_{h \rightarrow k}(t) &= \frac{\xi_{[0] h \rightarrow k}(t) \delta_{[i]\left(k \rightarrow h\right)}}{\xi_{[0] h \rightarrow 0}(t) + \sum_{k^{\prime} = 1}^{K(t)} \xi_{[0] h \rightarrow k^{\prime}}(t) \delta_{[i]\left(k^{\prime} \rightarrow h\right)}}.
\end{align}
\end{subequations}
After obtaining the association probability of each target, we can calculate the posterior distribution of each target at frame $t$ by using the posterior association probabilities.

\subsection{Kalman Filter}\label{KF}
After the targets and measurements have been associated properly by SPA module shown in Subsection \ref{SPAsubsection}, target tracking via the KF can be implemented as follows \cite{EstAppTraNav, PDAfilter}:
\begin{itemize}
	\item Prediction and Minimum Prediction MSE Matrix Step:
	\begin{subequations}\label{StateTransition}
		\begin{align}
			\hat{\mathbf{x}}_{k}(t | t - 1) &= \mathbf{A} \hat{\mathbf{x}}_{k}(t - 1) \\
			\boldsymbol{\Sigma}_{k}(t | t - 1) &= \mathbf{A} \boldsymbol{\Sigma}_{k}(t - 1) \mathbf{A}^{\rm T} + \boldsymbol{\Gamma} {\mathbf Q}_k(t - 1) \boldsymbol{\Gamma}^{\rm T}
		\end{align}
	\end{subequations}
	\item Kalman Gain Matrix Step:
	\begin{align}
		\mathbf{K}_{k}(t) = \boldsymbol{\Sigma}_{k}(t | t - 1) \mathbf{H}^{\rm T} \left(\mathbf{R}_{k}(t) + \mathbf{H} \boldsymbol{\Sigma}_{k}(t | t - 1) \mathbf{H}^{\rm T}\right)^{-1}.
	\end{align}
	\item Correction and Minimum MSE Matrix Step:
	\begin{subequations}\label{KalmanFilter}
		\begin{align}
			\hat{\mathbf{x}}_{k}(t) &= \hat{\mathbf{x}}_{k}(t | t - 1) + \mathbf{K}_{k}(t) \bar{\boldsymbol{\delta}}_{k}(t) \\
			\boldsymbol{\Sigma}_{k}(t) &= \boldsymbol{\Sigma}_{k}(t | t - 1) - \left[1 - \beta_{k \rightarrow 0}(t) \right] \mathbf{K}_{k}(t) \mathbf{H} \boldsymbol{\Sigma}_{k}(t | t - 1) + \tilde{\boldsymbol{\Sigma}}_{k}(t).
		\end{align}
	\end{subequations}
	where $\bar{\boldsymbol{\delta}}_{k}(t)$ is
	\begin{align}
		\bar{\boldsymbol{\delta}}_{k}(t) = \sum_{h = 1}^{H(t)} \beta_{k \rightarrow h}(t){\mathbf e}_{h,k}(t),
	\end{align}
	the covariance matrix $\tilde{\boldsymbol{\Sigma}}_{k}(t)$ is
	\begin{align}
		\tilde{\boldsymbol{\Sigma}}_{k}(t) = \mathbf{K}_{k}(t) \left[ \sum_{h = 1}^{H(t)} \beta_{k \rightarrow h}(t) {\mathbf e}_{h,k}(t) {\mathbf e}_{h,k}^{\rm T}(t) - \bar{\boldsymbol{\delta}}_{k}(t) \bar{\boldsymbol{\delta}}_{k}^{\rm T}(t) \right] \mathbf{K}_{k}^{\rm T}(t),
	\end{align}
	and ${\mathbf e}_{h,k}(t)$ denotes the predicted $h$th measurement error using the $k$th target state given as
	\begin{align}
		{\mathbf e}_{h,k}(t)=\mathbf{z}_{{h}}(t) - \mathbf{H} \hat{\mathbf{x}}_{k}(t | t - 1).
	\end{align}
\end{itemize}

\subsection{Appearance and Disappearance of Targets}
After obtaining the association probability for existed targets and measurements, we calculate the posterior distribution of targets via the KF shown in eq.(\ref{KalmanFilter}).
And we will discuss the appearance and disappearance of targets in this subsection.
If a measurement $\mathbf{z}_h(t)$ whose association probability with no existed target $\tilde{p}({b}_{h}(t) = 0 )$ is larger than $0.5$, we believe this measurement is generated by a new target.
The new target will be created based on the measurement, the position and velocity of the target will be set as
\begin{align}
    \mathbf{x}_{\rm new}(t) = \left[ r_h(t) \sin \theta_h(t), v_h(t) \sin \theta_h(t), r_h(t) \cos \theta_h(t), v_h(t) \cos \theta_h(t) \right]^{\rm T}
\end{align}
where $r_h(t), v_h(t), \theta_h(t)$ are the radial range, radial velocity and azimuth angle of the measurement given by 2D-MNOMP-CFAR method directly. We then perform DA and KF for the new targets at the next frame.

Conversely, if a target $\mathbf{x}_k(t)$ whose association probability with no measurements $\tilde{p}({w}_{k}(t) = 0)$ is larger than $0.5$, we believe that the target is missed detected at $t$th frame. We use a hyperparameter $N_{\rm ext}$ which denotes the number of extrapolation. Provided that the target is miss detected, we extrapolate its state using the previous results. If the target is missed detected for  consecutive $N_{\rm ext}$ times, we believe the target disappeared. 



\section{numerical results}
In this section, we verify the theoretical results and evaluate the performance of the proposed MNOMP-SPA-KF by comparing with other methods.


\subsection{Scenario Generation, Tracking Metrics and Benchmark Algorithms}
We consider the region of interest (ROI) given by $[- 30m, 30m] \times [0, 60m]$, with up to $20$ targets.
The total observation time is typically set as $T_{\rm tol} = 6$s and the observation interval is $0.1$s, which corresponds to $T_{\rm sam} = 60$ frames.
The initial positions are generated randomly within the range of $[- 10m, 10m] \times [6m, 24m]$; and the velocities are generate randomly within the range of $[- 3m/s, 3m/s] \times [-1m/s, 5m/s]$.
In that case, the possible appearance range of targets can be calculated as $[- 28m, 28m] \times [0, 60m]$.
The noise covariance matrix of $\mathbf{w}$ in eq.(\ref{TransitionEquation}) is set as ${\rm diag}\left\{ [10^{-6}, 10^{-6}]^{\rm T} \right\}$.
The clutter number is Possion distributed with mean $\mu_{\rm c} = 4$.

The parameters of mmWave radar are set as follows: Number of fast time samples, number of pulses in one CPI and number of receivers are $N = 128$, $M = 64$ and $L = 8$, respectively.
The idle time is $T_{\rm idle} = 100 \mu s$, the ramp time is $T_{\rm ramp} = 60 \mu s$, corresponding to circle time $T_{\rm c} = 160 \mu s$.
The ranges of the radial distance, the velocity and the azimuth are $[0, 100.0]m$, $[- 6.08, 6.08] m/s$ and $[- \pi / 2, \pi / 2]$, respectively.
The range resolution and velocity resolution are $0.78$m and $0.19 {\rm m} / {\rm s}$, respectively. The integrated ${\rm SNR}$ is set as $19 {\rm dB}$, which are constant for all the targets at each moment.

The parameters of the MNOMP-SPA-KF are set as follows: False alarm rate ${P}_{\rm FA} = 0.01/(128\times 64)$ \cite{FundamentalsRadarSP, Kaydet}; the training band and guard band is $[5, 4]$ and $[3, 3]$, which means the number of training cells is $50$; the multiple factor can be calculated as $4.2257$, i.e., $6.2590$dB; the upper bound of the number of targets is $K_{\rm max} = 30$; the gate probability is $P_{\rm G} = 0.95$; the number of iterations of SPA is $10$. The multiple factor of CRB is $\kappa = 1.2$, which is used to obtain the measurement covariance matrix ${\mathbf R}_k(t)$ of the pseudo noise ${\mathbf v}(t)$ (\ref{pseudomeasurementmnomp}).

The time-average mean optimal subpattern assignment (MOSPA) \cite{OSPA} is used to evaluate the performance of tracking algorithm, which is averaged over $300$ MC trials unless stated otherwise. The OSPA distance is calculated between object set $\mathfrak{X}$ and track set $\mathfrak{Y}$, which are
\begin{align}
\mathfrak{X}_k & =\left\{\left(\ell_1, \mathbf{x}_{k, 1}\right), \ldots,\left(\ell_m, \mathbf{x}_{k, m}\right)\right\}, \\
\mathfrak{Y}_k & =\left\{\left(s_1, \mathbf{y}_{k, 1}\right), \ldots,\left(s_n, \mathbf{y}_{k, n}\right)\right\},
\end{align}
where $\ell, s$ are the track label, $\mathbf{x}, \mathbf{y}$ are the state vectors and $m, n$ are the number of real tracks and estimated tracks.
For $m < n$, the OSPA distance between $\mathfrak{X}$ and $\mathfrak{Y}$ can be shown as \cite{OSPA}
\begin{align}
D_{p, c}\left(\mathfrak{X}_k, \mathfrak{Y}_k\right) =\left[\frac{1}{n}\left(\min _{\pi \in \Pi_n} \sum_{i=1}^m\left(d_c\left(\tilde{\mathbf{x}}_{k, i}, \tilde{\mathbf{y}}_{k, \pi(i)}\right)\right)^p+(n-m) \cdot c^p\right)\right]^{\frac{1}{p}}
\end{align}
where $\tilde{\mathbf{x}}_{k, i} \equiv\left(\ell_i, \mathbf{x}_{k, i}\right), \tilde{\mathbf{y}}_{k, \pi(i)} \equiv\left(s_{\pi(i)}, \mathbf{y}_{k, \pi(i)}\right)$ and $d_c(\tilde{\mathbf{x}}, \tilde{\mathbf{y}})=\min (c, d(\tilde{\mathbf{x}}, \tilde{\mathbf{y}}))$ is the cutoff distance between two tracks at $t_k$, with $c>0$ being the cutoff parameter; $d(\tilde{\mathbf{x}}, \tilde{\mathbf{y}})$ is the base distance between two tracks at $t_k$; $\Pi_n$ represents the set of permutations of length $m$ with elements taken from $\{1,2, \ldots, n\}$; $1 \leq p<\infty$ is the OSPA metric order parameter.

Now we construct the benchmark algorithms by analyzing the modules in completing the tracking task. Note that the proposed MNOMP-SPA-KF consists of three modules, namely, the estimation and detection module using MNOMP, the data association module using SPA, the KF module implemented by utilizing CRB to characterize the correlation of the measurements. A naive solution is to replace those three modules with the existing methods and integrates them together to perform target detection and tracking. For target estimation and detection, we have implemented 
\begin{itemize}
    \item 2D-FFT-CFAR \cite{FundamentalsRadarSP}: We perform two dimensional FFT on the fast and slow time domain to obtain the range-Doppler heat map for each antenna, which is then averaged over the spatial domain and input to the two dimensional CFAR detector.
    \item 2D-MNOMP-CFAR: The implementation details have been described in Section \ref{2DMNOMP}.
\end{itemize}
For data association, we have implemented
\begin{itemize}
    \item PDA \cite{PDA}: The PDA algorithm calculates the association probabilities to the target being tracked for each validated measurement at the current time. The PDA algorithm performs well in the single target scene.
    \item JPDA \cite{JPDA}: The JPDA approach is the extension of the PDA. The key feature of the JPDA is that it treats the data association process probabilistically and considers all possible measurement-to-target assignments. The JPDA is designed to address the multi-targets scene. In the following experiments, the JPDA method is run by the MATLAB function ``trackerJPDA'' \cite{trackerJPDA}.
    \item SPA with 3D-Gate: This incorporates the radial velocity information to perform data association, which is an extension of the SPA with 2D-Gate. The implementation details have been described in Section \ref{SPAsubsection}\footnote{SPA with 2D-Gate is a very standard SPA employing the positional information for data association.}.
\end{itemize}
For target tracking, we have implemented
\begin{itemize}
    \item subKF: The subKF is implemented by ignoring the correlation of the measurements, i.e., the covariance of the measurements ${\mathbf z}(t)$ in (\ref{pseudomeasurement}) is a scaled identity matrix. 
    \item KF: The Kalman filter is implemented by taking the correlation of the measurements into account, i.e., the covariance of the measurements ${\mathbf z}(t)$ in (\ref{pseudomeasurement}) is a scaled CRB, whose details are shown in Section \ref{KF}.
\end{itemize}
Based on the above methods, a natural idea is to cascade the methods from different stages. Because the MATLAB built-in function ``trackerJPDA'' incorporates JPDA and subKF, we only run JPDA-subKF. Consequently, we have implemented $10$ methods, i.e., FFT-PDA-subKF, FFT-PDA-KF, FFT-JPDA-subKF, FFT-SPA-subKF, FFT-SPA-KF, MNOMP-PDA-subKF, MNOMP-PDA-KF, MNOMP-JPDA-subKF, MNOMP-SPA-subKF, MNOMP-SPA-KF (3D-Gate).
To show the benefit of adopting the 3D-Gate in data association, MNOMP-SPA-KF (2D-Gate) is also implemented, where the SPA (2D-Gate) only exploits the two dimensional positional information.
In the following, we evaluate these methods and demonstrate the advantage of the proposed integration, i.e., MNOMP-SPA-KF (3D-Gate).

\subsection{Verification of CRB}
In this subsection, we validate the correctness of the ${\rm CRB}\left[ p_x, p_y \right]$ (\ref{CRBpxpy}). We set the number of targets as $1$ and we use the 3D-NOMP and 2D-MMNOMP algorithm to estimate the targets' radial range and azimuth angle, then we obtain the position estimates. The results are averaged over 300 MC trials, and results are shown in Fig. \ref{MSEandCRB}.

\begin{figure}
	\centering
	\subfigure[]{
	\label{MSEofpx}
	\includegraphics[width = 60mm]{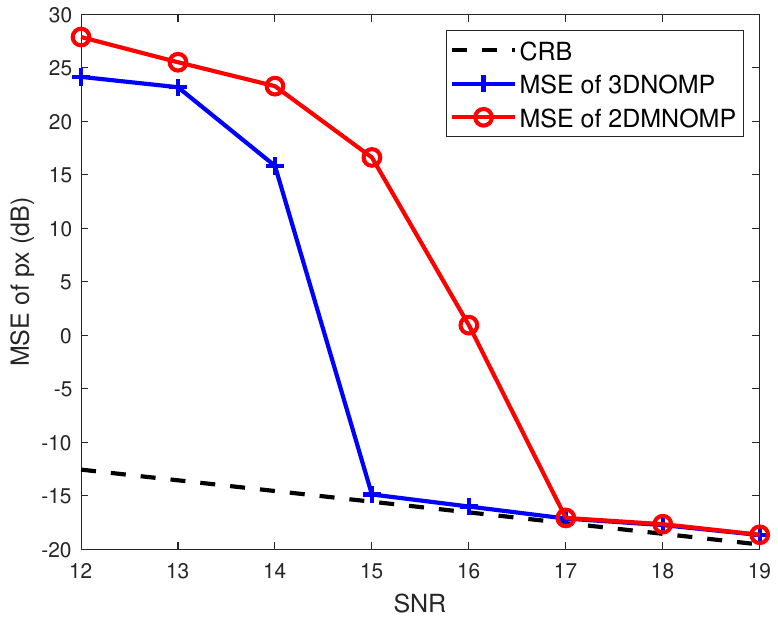}}
	\subfigure[]{
	\label{MSEofpy}
	\includegraphics[width = 60mm]{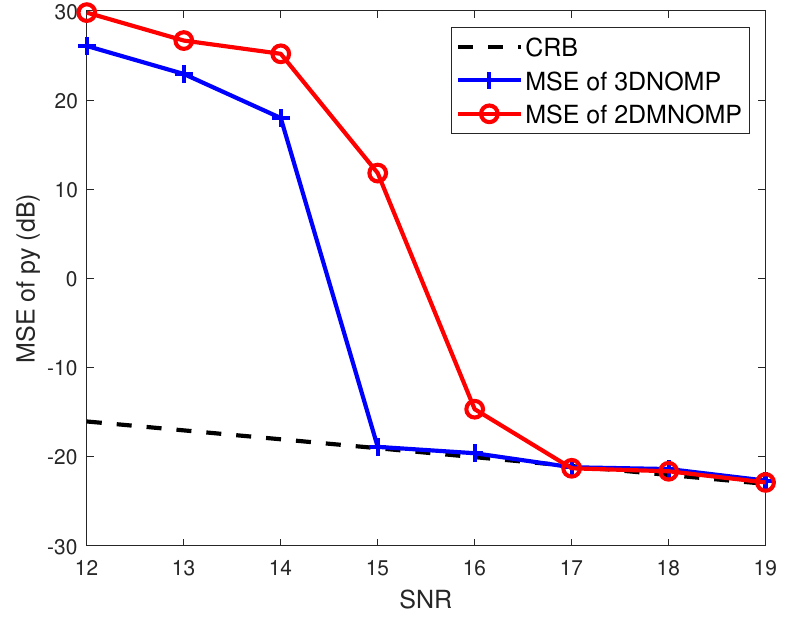}}
	\caption{The CRB in eq.(\ref{CRBpxpy}) and the estimated MSE by 3D-NOMP and 2D-MNOMP. (a) MSE and CRB of $p_x$. (b) MSE and CRB of $p_y$.} \label{MSEandCRB}
\end{figure}

From Fig. \ref{MSEandCRB}, we observe that there exists a threshold phenomenon: When the SNR exceeds $15$ dB, the MSE of the 3D-NOMP closely matches to the CRB, demonstrating that the 3D-NOMP asymptotically approaches to the CRB as SNR increases; while when the SNR exceeds $17$ dB, the 2DMNOMP performs well and approaches to 3D-NOMP and CRB.

\subsection{Performance Comparisons}
\begin{figure}
    \centering
    \subfigure[\scriptsize{FFT-PDA-subKF}]{
    \label{FFT_PDA_KF_sub_6tar}
    \includegraphics[width = 36mm]  {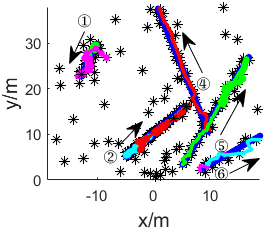}}
    \subfigure[\scriptsize FFT-PDA-KF]{
    \label{FFT_PDA_KF_6tar}
    \includegraphics[width = 36mm]{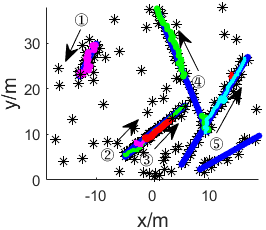}}
    \subfigure[\scriptsize FFT-JPDA-subKF]{
    \label{FFT_JPDA_KF_6tar}
    \includegraphics[width = 36mm]{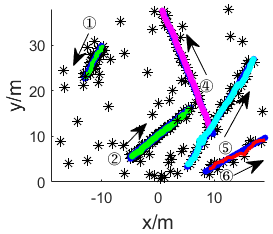}}
    \subfigure[\scriptsize FFT-SPA-subKF]{
    \label{FFT_SPA_subKF_6tar}
    \includegraphics[width = 36mm]{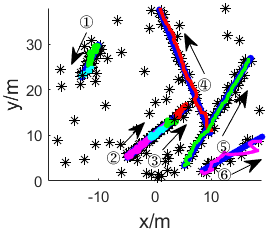}}
    \subfigure[\scriptsize FFT-SPA-KF]{
    \label{FFT_SPA_KF_6tar}
    \includegraphics[width = 36mm]{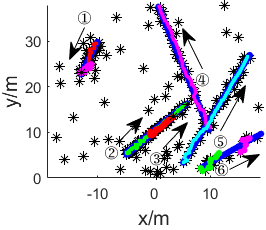}}
    \subfigure[\scriptsize MNOMP-PDA-subKF]{
    \label{MNOMP_PDA_subKF_6tar}
    \includegraphics[width = 36mm]{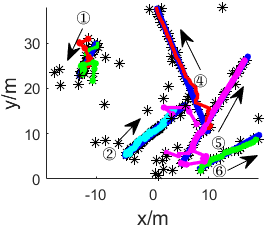}}
    \subfigure[\scriptsize MNOMP-PDA-KF]{
    \label{MNOMP_PDA_KF_6tar}
    \includegraphics[width = 36mm]{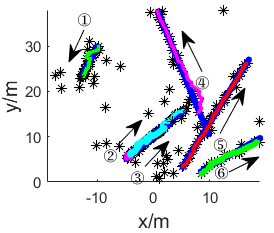}}
    \subfigure[\scriptsize MNOMP-JPDA-subKF]{
    \label{MNOMP_JPDA_KF_6tar}
    \includegraphics[width = 36mm]{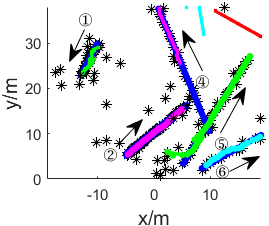}}
    \subfigure[\scriptsize MNOMP-SPA-subKF]{
    \label{MNOMP_SPA_subKF_6tar}
    \includegraphics[width = 36mm]{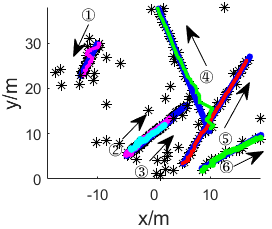}}
    \subfigure[\scriptsize 3DNOMP-SPA-KF]{
    \label{3DNOMP_SPA_KF_6tar}
    \includegraphics[width = 36mm]{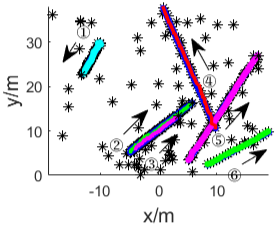}}
    \subfigure[\tiny MNOMP-SPA-KF (2D-Gate)]{
    \label{MNOMP_SPA(2D_gate)_KF_6tar}
    \includegraphics[width = 36mm]{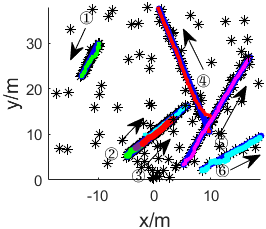}}
    \subfigure[\scriptsize MNOMP-SPA-KF]{
    \label{MNOMP_SPA_KF_6tar}
    \includegraphics[width = 36mm]{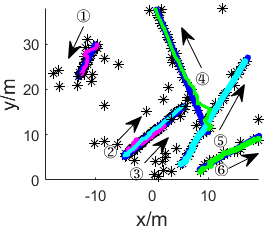}}
	\caption{The tracking results of different algorithms. The arrow indicates the direction of the target movement.}\label{Exp4scene}
\end{figure}
\begin{figure}
	\centering
	\includegraphics[width = 150mm]{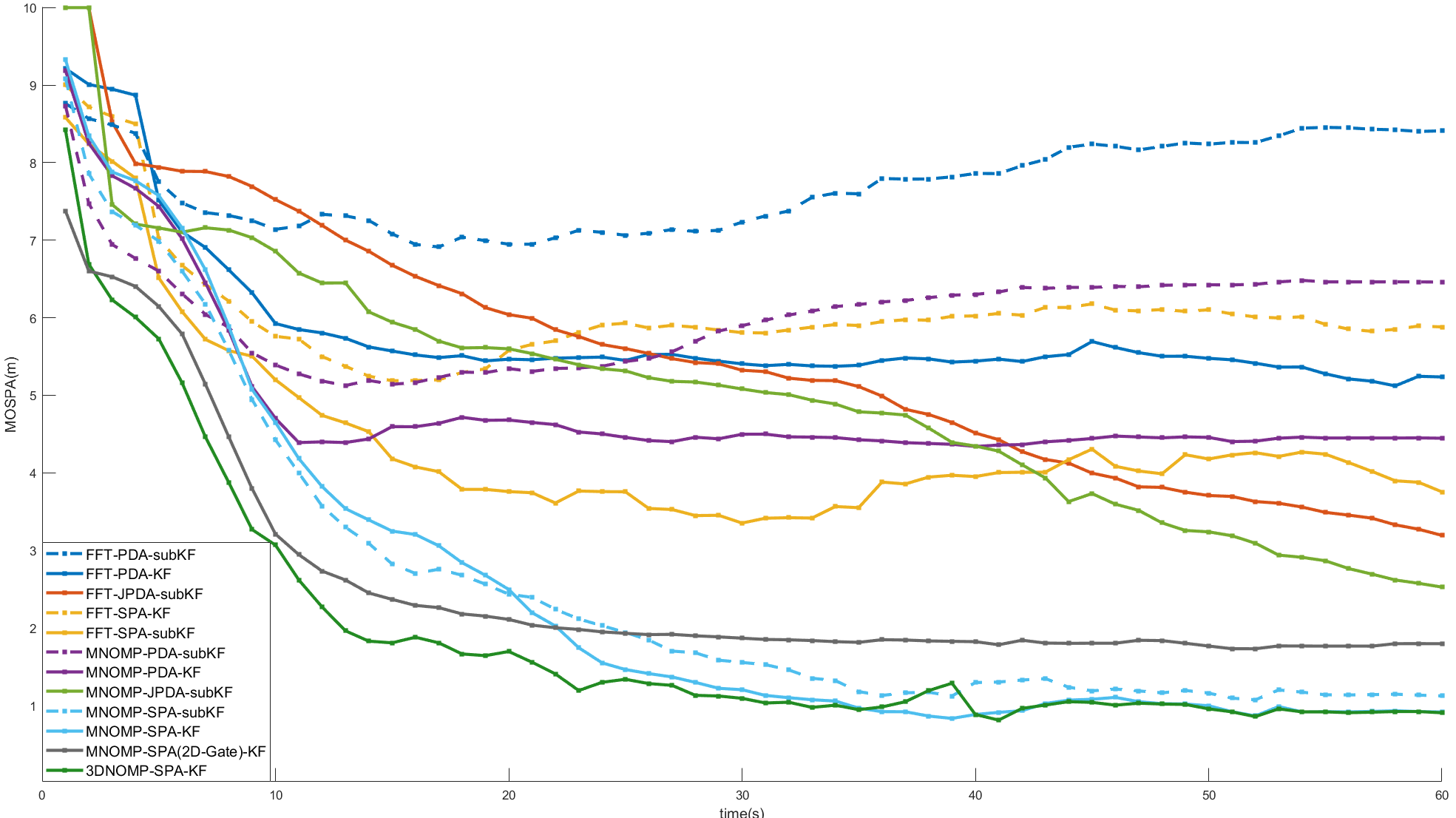}
	\caption{The time-average MOSPA of different algorithms.}\label{MOSPAcompare_all}
\end{figure}

There exist six targets, where target \textcircled{2} and target \textcircled{3} are very close at the beginning, i.e., they are only 0.5 m apart and their velocity difference is 0.4 m/s, compared to the range resolution $0.78$ m and velocity resolution $0.19$ m/s. Here, 3DNOMP-SPA-KF is also implemented where the radial distances, radial velocities and the azimuths of multiple targets are extracted by performing the $3$ dimensional LSE, although it is time consuming. And for FFT-based algorithms, we applied the DBSCAN clustering algorithm to group detections from the same target and suppress clutters \cite{DBSCAN}.

Fig. \ref{Exp4scene} illustrates the target trajectories produced by the conventional algorithms, the proposed 2DMNOMP-SPA-KF (2D-Gate), 2DMNOMP-SPA-KF (3D-Gate) and 3DNOMP-SPA-KF (3D-Gate).
As shown in Fig. \ref{FFT_PDA_KF_sub_6tar} $\sim$ Fig. \ref{FFT_SPA_KF_6tar}, the FFT-based method struggles to distinguish nearby targets, such as target \textcircled{2} and target \textcircled{3}, in a timely manner. It can only reliably separate their trajectories when the targets are sufficiently spaced apart (approximately 4 meters). For subKF based methods, Fig. \ref{MNOMP_PDA_subKF_6tar} shows that due to the strong interference from clutter, MNOMP-PDA-subKF generates two trajectories around target \textcircled{5},
which nearly overlap as time goes on.
Fig. \ref{MNOMP_JPDA_KF_6tar} shows that MNOMP-JPDA-subKF fails to track target \textcircled{4} at the beginning. Fig. \ref{MNOMP_SPA_subKF_6tar} shows that MNOMP-SPA-subKF successfully tracks all the targets at the beginning, but fails to track target \textcircled{2} and target \textcircled{3} at the end. Although JPDA-subKF also demonstrates high association accuracy, its computational complexity is higher than that of SPA-KF \footnote{Numerically it if found that JPDA is an order of magnitude slower than SPA.}.
  
  As for NOMP and KF based methods,  Fig. \ref{MNOMP_PDA_KF_6tar} shows that MNOMP-PDA-KF is more stable than the previously described methods, but fails to track target \textcircled{4} at the beginning.
  Fig. \ref{MNOMP_SPA(2D_gate)_KF_6tar} shows that MNOMP-SPA-KF (2D-Gate) fails to track target \textcircled{3} at the beginning.
  As shown in Fig. \ref{3DNOMP_SPA_KF_6tar} and Fig. \ref{MNOMP_SPA_KF_6tar}, the proposed 3DNOMP-SPA-KF and MNOMP-SPA-KF methods generate the most stable trajectories, indicating that they are less affected by clutter. Unlike the previously described methods, the 3DNOMP-SPA-KF and MNOMP-SPA-KF are capable of accurately tracking the targets all the time.
  
We perform extensive numerical simulations to highlight the advantages of the proposed MNOMP-SPA-KF and 3DNOMP-SPA-KF. After conducting 300 MC trials, we calculate the MOSPA at each time step and present the results in Fig. \ref{MOSPAcompare_all}. The MOSPAs of MNOMP-SPA-KF and 3DNOMP-SPA-KF are a little lower than MNOMP-SPA-subKF and significantly lower than the other methods, demonstrating the effectiveness of the integration.

\subsection{Performance of MNOMP-SPA-KF in Terms of Detecting the Weak Target}

In this section, the weak target detection ability of MNOMP-SPA-KF is investigated. There exist three targets, two of which are strong with time domain SNRs being $-23.9$ dB, the other is a weak target with time domain SNR being $-33.9$ dB. After coherent integration in fast time and slow time domain, the integrated SNRs of the three targets are $15.2$ dB, $15.2$ dB and $5.2$ dB, respectively. Here we set the false alarm rate as ${ P}_{\rm FA} = 0.1/(128\times 64)$, which is a little higher Compared to the false alarm rate ${ P}_{\rm FA} = 0.01/(128\times 64)$ setting in the previous simulation.

\begin{figure}
	\centering
        \subfigure[]{
	\label{MNOMP_result}
	\includegraphics[width = 60mm]{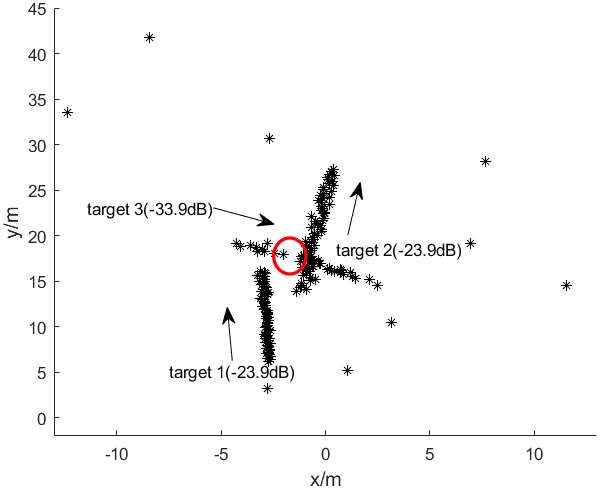}
    \label{MNOMP_meas}}
	\subfigure[]{
	\label{trackweak}
	\includegraphics[width = 60mm]{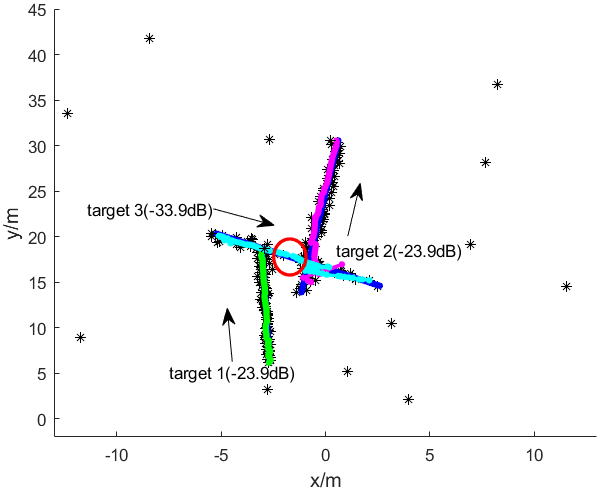}
    \label{MNOMP_spa_meas}}
	\caption{The results of MNOMP and MNOMP-SPA-KF in the scene where two strong targets and one weak target coexist. (a) The detection and estimation of MNOMP,
    (b) The tracking result of MNOMP-SPA-KF. 
    }\label{strongweak} 
\end{figure}

As shown in Fig. \ref{MNOMP_meas}, the weak target is not detected by 2D-MNOMP at certain time, see the labeled red circle. However, in Fig. \ref{MNOMP_spa_meas}, the 2DMNOMP-SPA-KF tracking algorithm successfully maintains the trajectory, thereby enhancing the detection probability. Moreover, the false alarms can be suppressed by tracks.

\section{real data experiment}\label{realexp}
\begin{figure}
	\centering
	\includegraphics[width = 90mm]{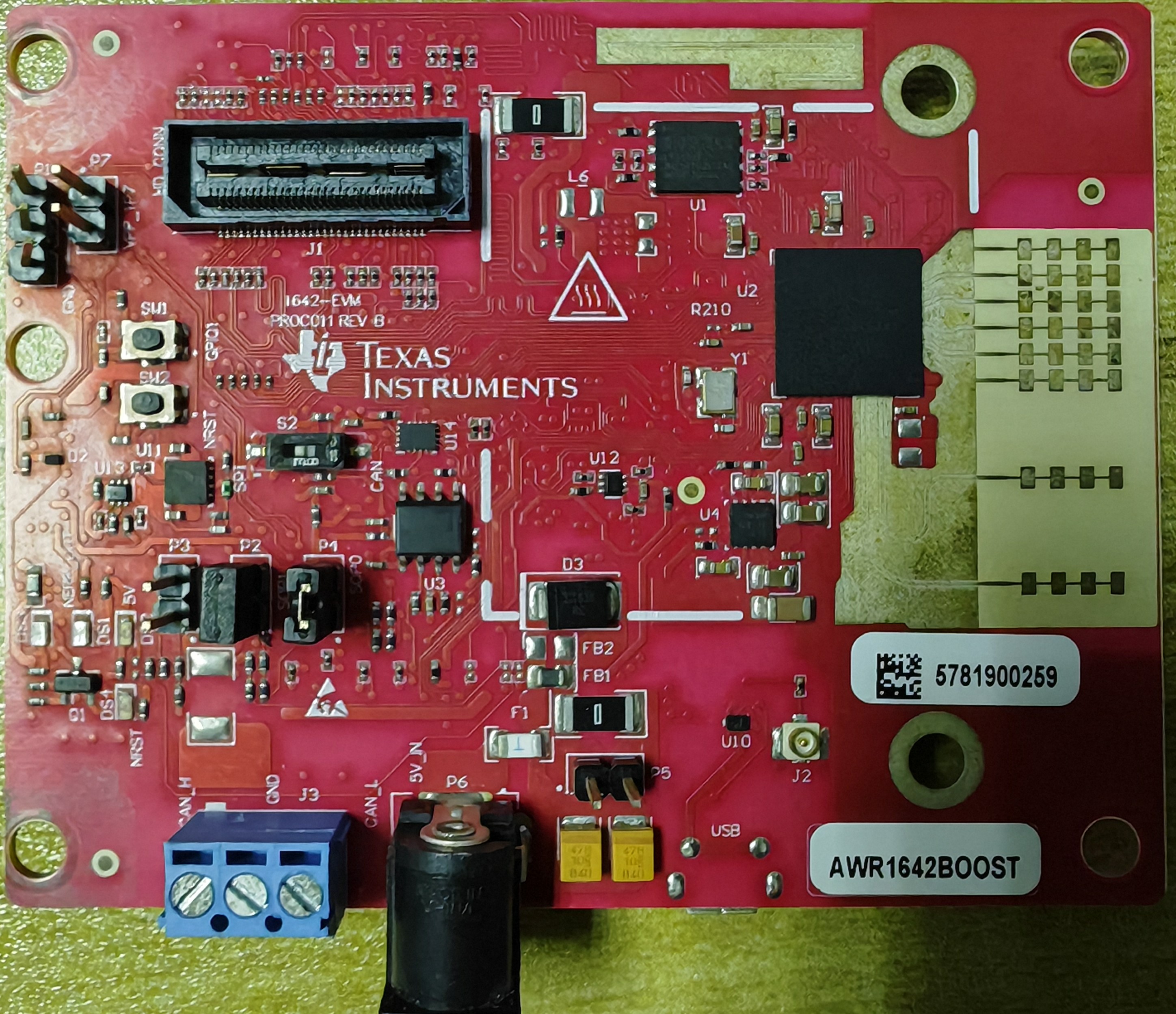}
	\caption{Physical image of AWR1642 radar.}\label{AWR1642}
\end{figure}
In this section, the performance of the proposed MNOMP-SPA-KF is demonstrated by conducting real experiments using AWR-1642 radar and a physical image is shown as Fig. \ref{AWR1642}. The AWR1642 radar is a FMCW MIMO radar consisting of two transmitters and four receivers. The radar parameters and waveform specifications are listed in Table \ref{tab:Radar_parameters}.

\newcommand{\tabincell}[2]{\begin{tabular}{@{}#1@{}}#2\end{tabular}}
\begin{table}[!t]
\centering
\scriptsize
\caption{Parameters Setting of the Experiment}
\label{tab:Radar_parameters}
\begin{tabular}{ll}
    \\[-2mm]
    \hline
    \hline\\[-2mm]
    { \small Parameters}&\qquad {\small Value}\\
    \hline
    \vspace{1mm}\\[-3mm]
    Carrier frequency $f_c$   &   77 GHz\\
    \vspace{1mm}
    Frequency modulation slope $\mu$          &  $8.012$ MHz/$\mu$s\\
    \vspace{1mm}
    sweep time $T_p$         &  $56 {\rm \mu s}$\\
    \vspace{1mm}
    Pulse repeat interval $T_{\text{r}}$         &  $3 {\rm \mu s}$\\
    \vspace{1mm}
    Bandwidth $B$          &  448.672 MHz\\
    \vspace{1mm}
    Sampling frequency $f_s = 1 / T_s$         &  5 MHz\\
    \vspace{1mm}
    Number of fast time samples $N$          &  128 \\
    \vspace{1mm}
    Number of pulses in one CPI $M$          & 64\\
    \vspace{1mm}
    Number of Receivers $L$      &   4\\
    \vspace{1mm}
    observation time interval $T$    &   0.1s\\
	\vspace{1mm}
	maximum range $r_{\rm max}$ & 93.6m \\
	\vspace{1mm}
	maximum measurable radial velocity $v_{\rm max}$ & 16.5 m/s \\
	\vspace{1mm}
	maximum field of view $\theta_{\rm max}$ & $\pi / 3$ \\
    [1mm]
    \hline
    \hline
\end{tabular}
\end{table}

The default parameters of the MNOMP-SPA-KF are set as follows: false alarm rate ${\rm P}_{\rm FA} = 0.01/(128 \times64)$, the number of training cells is $60$, the threshold multiplier can be calculated as $5.4783$, i.e., $7.3865$ dB, the gate probability is ${P}_{\rm G} = 0.95$, the number of SPA iterations is $10$. In practical situations, 3D-NOMP generates many false alarms and is very time consuming. Consequently, we employ 2D-MNOMP to process the baseband data.

Because the targets such as pedestrians, bicycles, and cars typically generate multiple measurements, clustering algorithm is employed to preprocess the measurements which are the output of the 2D-MNOMP algorithm. We compute the mean position and velocity of each measurement within each cluster to obtain a new measurement set.


\subsection{Experiment I}
Fig. \ref{Exp1scene} shows the setup of field experiment I.
In experiment I, people $1$ walked from $(-1.50m, 1.24m)$ to $(2.50m, 4.74m)$, while people $2$ walked from $(-2.00m, 5.24m)$ to $(2.00m, 1.24m)$.
The total observation time is $6.4$s, which corresponds to $64$ frames.

Fig. \ref{exp1MNOMP_PDA} $ \sim $ Fig. \ref{exp1MNOMP_SPA_subKF} illustrate the target trajectories produced by the benchmark algorithms and the proposed MNOMP-SPA-KF. 
Fig. \ref{exp1MNOMP_PDA} shows that MNOMP-PDA-KF can track the two people, but also generate many false tracks. Fig. \ref{exp1MNOMP_PDA_subKF} shows that MNOMP-PDA-subKF fails to track the two people. Fig. \ref{exp1MNOMP_JPDA_KF} shows that MNOMP-JPDA-subKF can track the two people without generating many false tracks, but the error between the trajectories and the actual paths is high when the trajectories intersect. Fig. \ref{exp1MNOMP_SPA_subKF} shows that MNOMP-SPA-subKF can track people 1, but fails to track people 2 all the time.

The tracking results of MNOMP-SPA-KF (2D-Gate) with MNOMP-SPA-KF (3D-Gate) are shown in Fig. \ref{exp1Res_2DGate} and Fig. \ref{exp1Res_3DGate}, respectively. The MNOMP-SPA-KF algorithm successfully tracks two people, with the estimated trajectories closely aligning with the actual paths and effectively suppressing false alarms.
Comparing Fig. \ref{exp1Res_2DGate} with Fig. \ref{exp1Res_3DGate}, we observe that MNOMP-SPA-KF (2D-Gate) is more susceptible to clutter and interference from another target during the tracking process, resulting in deviations and interruptions in the trajectories at the intersection of the two people.
However, MNOMP-SPA-KF (3D-Gate) eliminates interference from noise and the other target, ensuring stable trajectories even at intersections.

\begin{figure}
\centering
\includegraphics[width = 120mm]{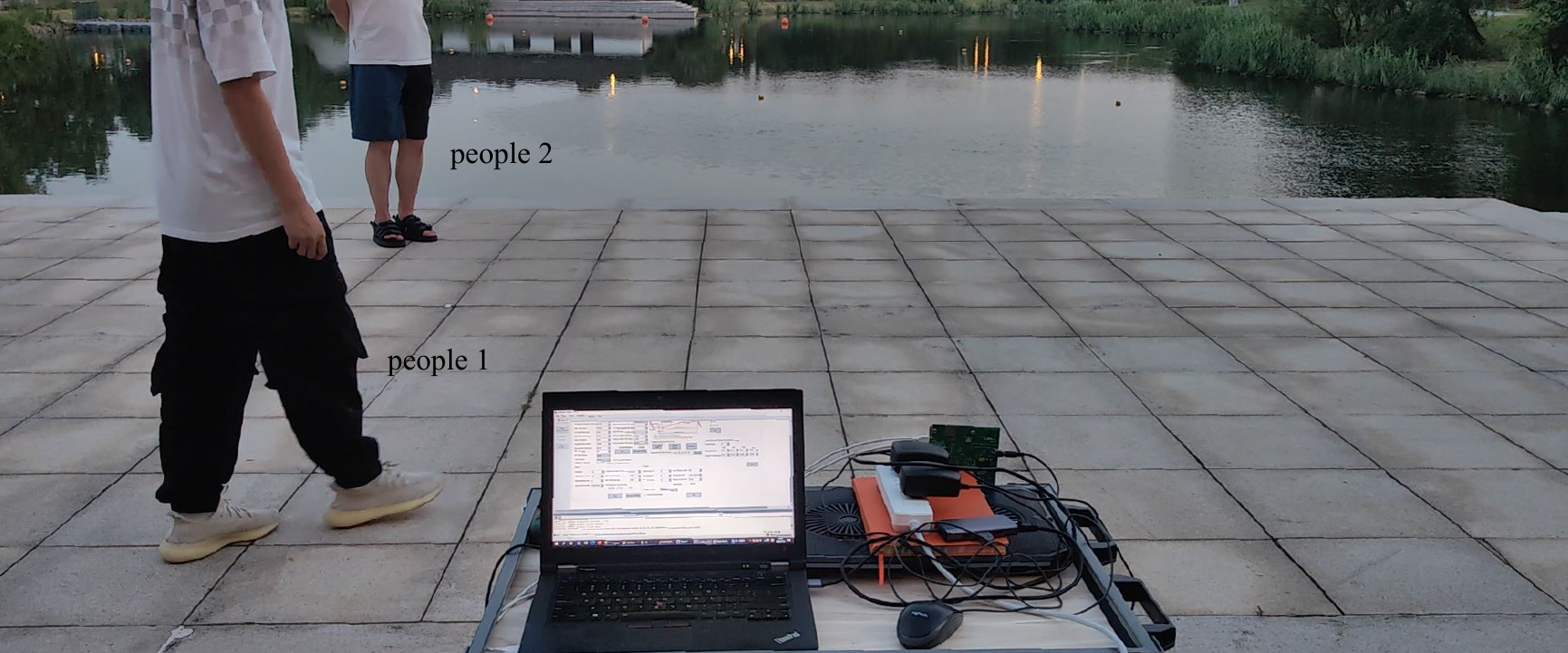}
\caption{The scene and original state of experiment I.}\label{Exp1scene}
\end{figure}
\begin{figure}
\centering
\subfigure[MNOMP-PDA-KF.]{
\label{exp1MNOMP_PDA}
\includegraphics[width = 50mm]{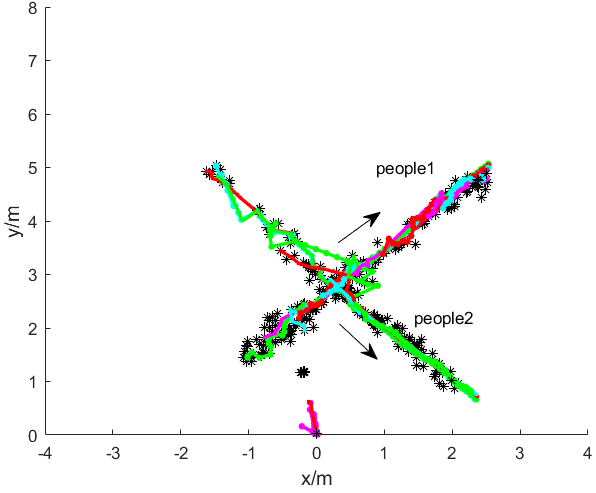}}
\subfigure[MNOMP-PDA-subKF.]{
\label{exp1MNOMP_PDA_subKF}
\includegraphics[width = 50mm]{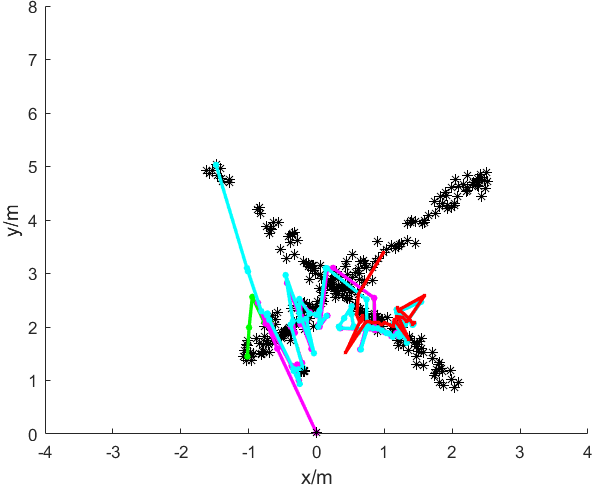}}
\subfigure[MNOMP-JPDA-subKF.]{
\label{exp1MNOMP_JPDA_KF}
\includegraphics[width = 50mm]{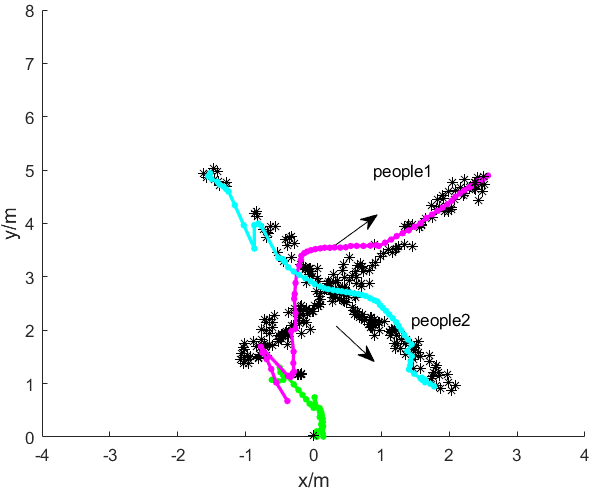}}
\subfigure[MNOMP-SPA-subKF.]{
\label{exp1MNOMP_SPA_subKF}
\includegraphics[width = 50mm]{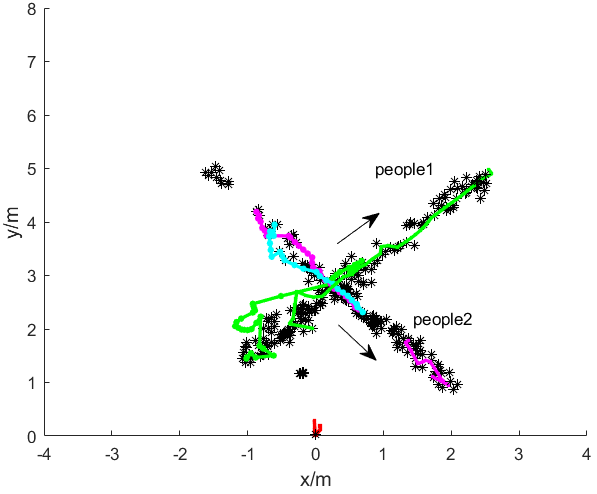}}
\subfigure[\scriptsize MNOMP-SPA-KF with 2D-Gate.]{
\label{exp1Res_2DGate}
\includegraphics[width = 50mm]{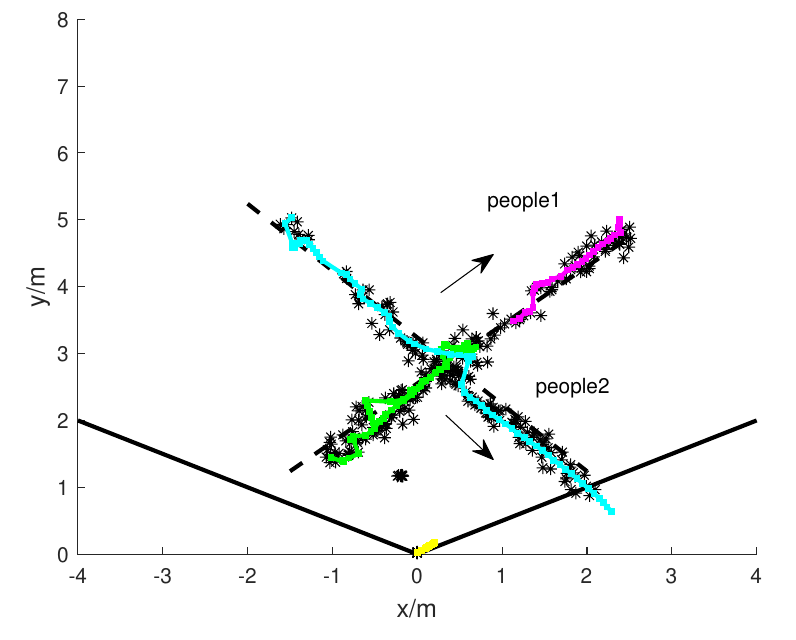}}
\subfigure[\scriptsize MNOMP-SPA-KF with 3D-Gate.]{
\label{exp1Res_3DGate}
\includegraphics[width = 50mm]{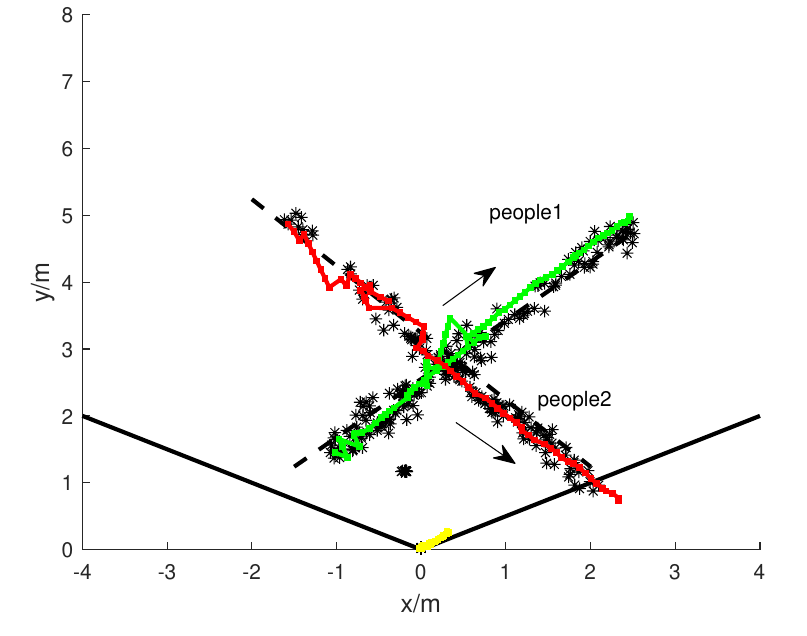}}
\caption{The tracking results of different algorithms. The arrow indicates the direction of the target movement.} \label{exp1Res}
\end{figure}



\subsection{Experiment II}

Fig. \ref{Exp2scene} shows the experimental setup for Experiment II, which includes two people, one cyclist, and a stationary car.
The two people are moving towards the radar, while the cyclist initially moves away from the radar before returning towards it.
Fig. \ref{exp2MNOMP_SPA} $ \sim $  Fig. \ref{exp2MNOMP_JPDA_KF} illustrate the target trajectories produced by the benchmark algorithms and the proposed MNOMP-SPA-KF. From Fig. \ref{exp2MNOMP_SPA}, we can observe that MNOMP-SPA-KF algorithm roughly outlines the trajectories of the two people and the cyclist. Fig. \ref{exp2MNOMP_SPA_subKF} $ \sim $  Fig. \ref{exp2MNOMP_JPDA_KF} show that MNOMP-SPA-subKF and MNOMP-PDA-KF track the two people and the cyclist, but generate more false tracks compared to MNOMP-SPA-KF. MNOMP-PDA-subKF and MNOMP-JPDA-subKF fails to track people 2 and also generate many false tracks.

The entire scenario spans 10 seconds, comprising a total of 100 frames of data.
After employing the MNOMP-SPA-KF algorithm with both 2D-Gate and 3D-Gate for sensing and tracking, we have extracted and displayed the tracking results for the 25th, 50th, 75th and 100th frames in Fig. \ref{exp2Res}, which correspond to the estimated trajectories at 2.5s, 5s, 7.5s, and 10s, respectively.
To make the results intuitive and concise, we omit trajectories of targets that have disappeared at the time of plotting.

\begin{figure}
\centering
\includegraphics[width = 100mm]{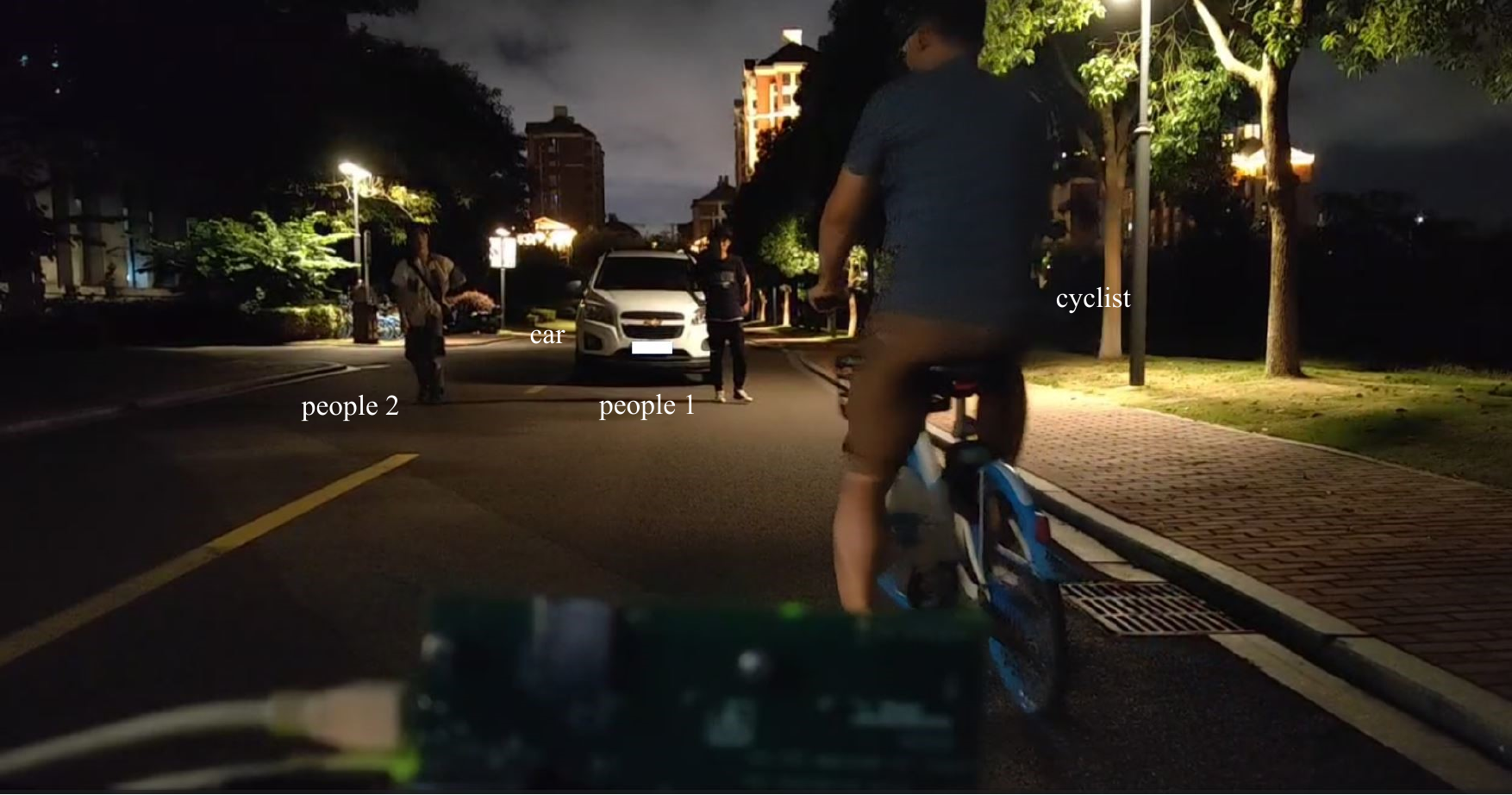}
\caption{The scene and original state of experiment II.}\label{Exp2scene}
\end{figure}

\begin{figure}
\centering
\subfigure[\scriptsize MNOMP-SPA-KF.]{
\label{exp2MNOMP_SPA}
\includegraphics[width = 48mm]{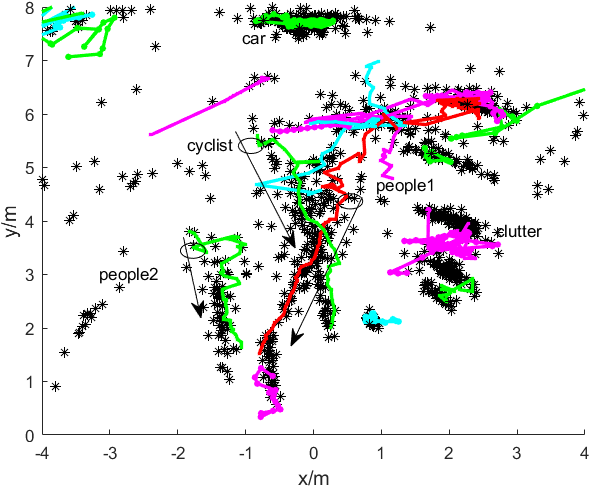}}
\subfigure[\scriptsize MNOMP-SPA-subKF.]{
\label{exp2MNOMP_SPA_subKF}
\includegraphics[width = 48mm]{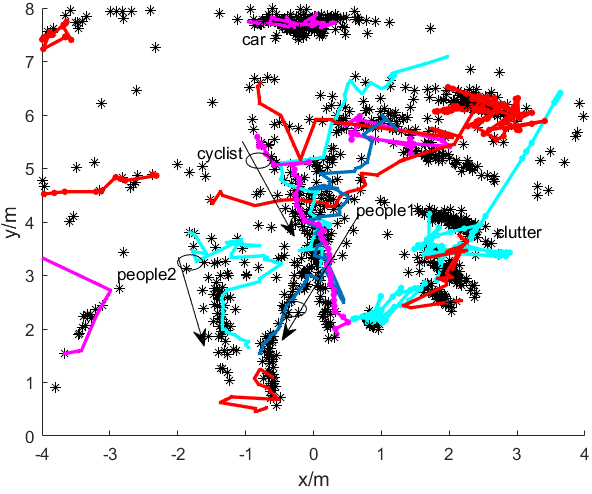}}
\subfigure[\scriptsize MNOMP-PDA-KF.]{
\label{exp2MNOMP_PDA}
\includegraphics[width = 48mm]{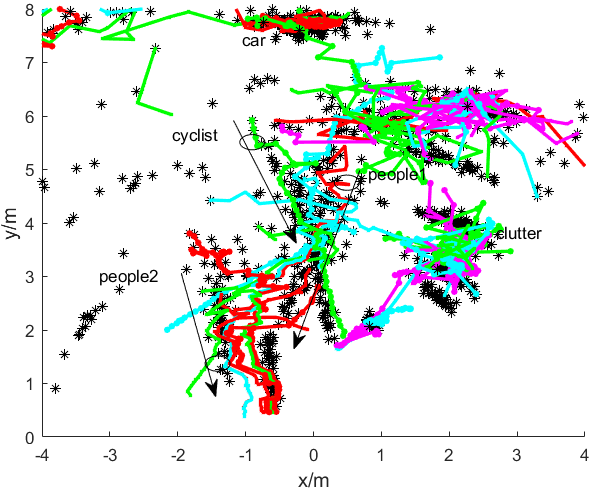}}
\subfigure[\scriptsize MNOMP-PDA-subKF.]{
\label{exp2MNOMP_PDA_subKF}
\includegraphics[width = 48mm]{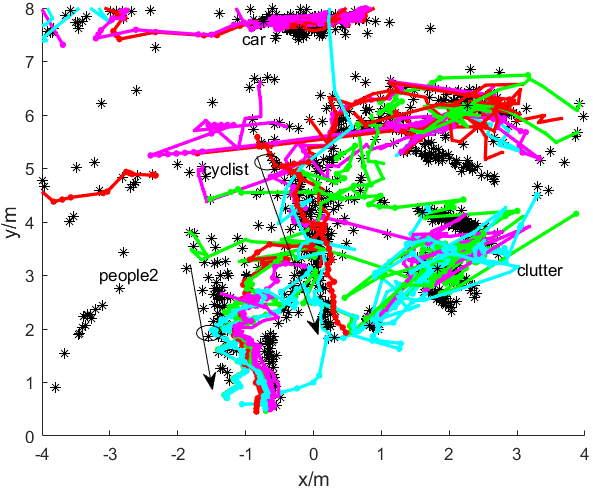}}
\subfigure[\scriptsize MNOMP-JPDA-subKF.]{
\label{exp2MNOMP_JPDA_KF}
\includegraphics[width = 48mm]{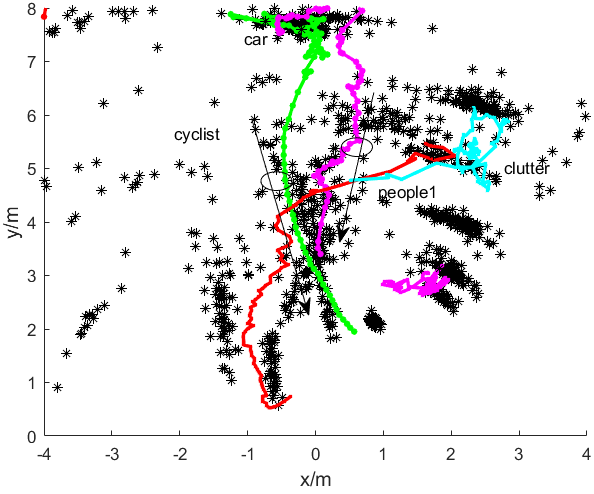}}
\caption{The tracking results of experiment II by different algorithms. The arrow indicates the direction of the target movement.}\label{exp2Res}
\end{figure}

\begin{figure}
\centering
\subfigure[\tiny MNOMP-SPA-KF (2D-Gate) at 2.5s.]{
\label{expII2D_25}
\includegraphics[width = 38mm]{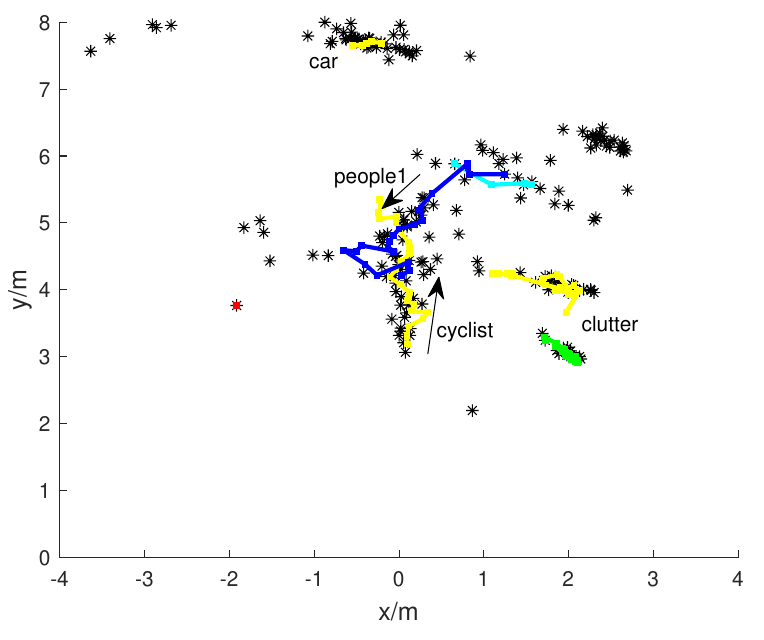}}
\subfigure[\tiny MNOMP-SPA-KF (2D-Gate) at 5s.]{
\label{expII2D_50}
\includegraphics[width = 38mm]{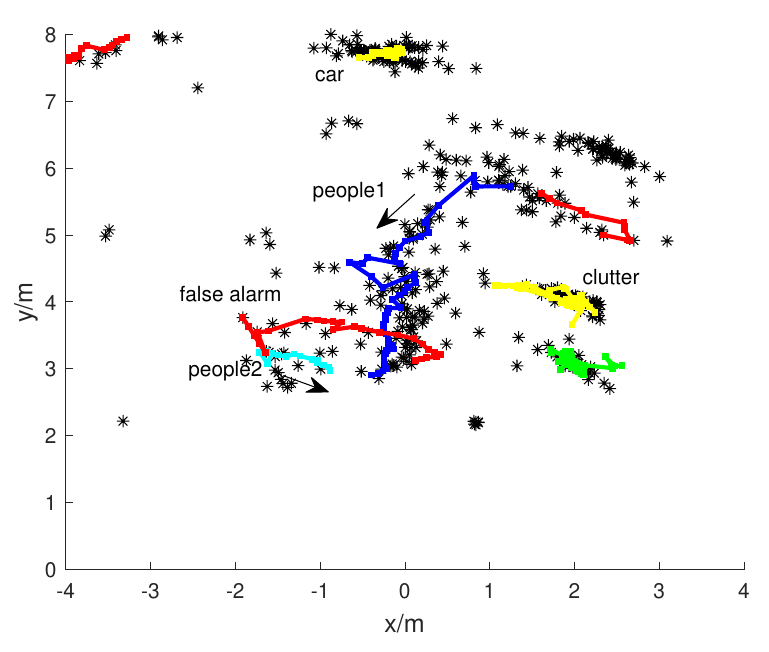}}
\subfigure[\tiny MNOMP-SPA-KF (2D-Gate) at 7.5s.]{
\label{expII2D_75}
\includegraphics[width = 38mm]{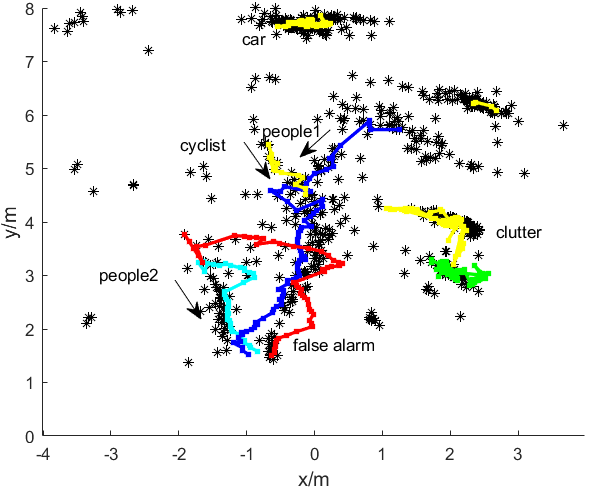}}
\subfigure[\tiny MNOMP-SPA-KF (2D-Gate) at  10s.]{
\label{expII2D_100}
\includegraphics[width = 38mm]{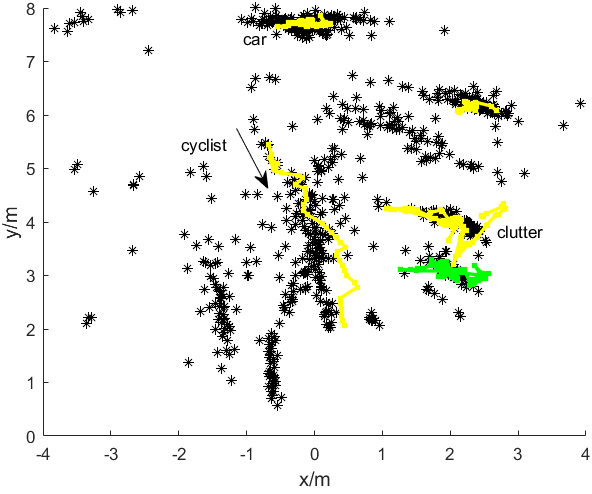}}
\subfigure[\tiny MNOMP-SPA-KF (3D-Gate) at 2.5s.]{
\label{expII3D_25}
\includegraphics[width = 38mm]{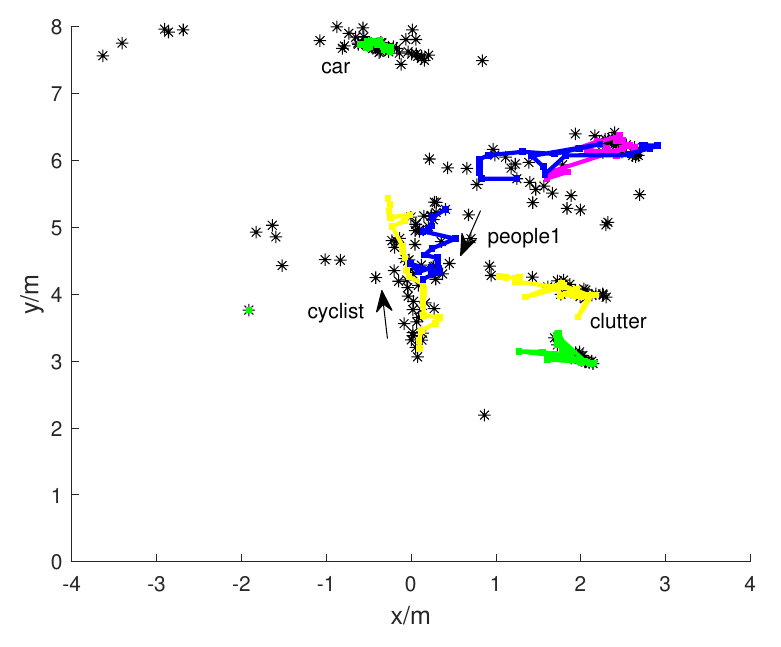}}
\subfigure[\tiny MNOMP-SPA-KF (3D-Gate) at 5s.]{
\label{expII3D_50}
\includegraphics[width = 38mm]{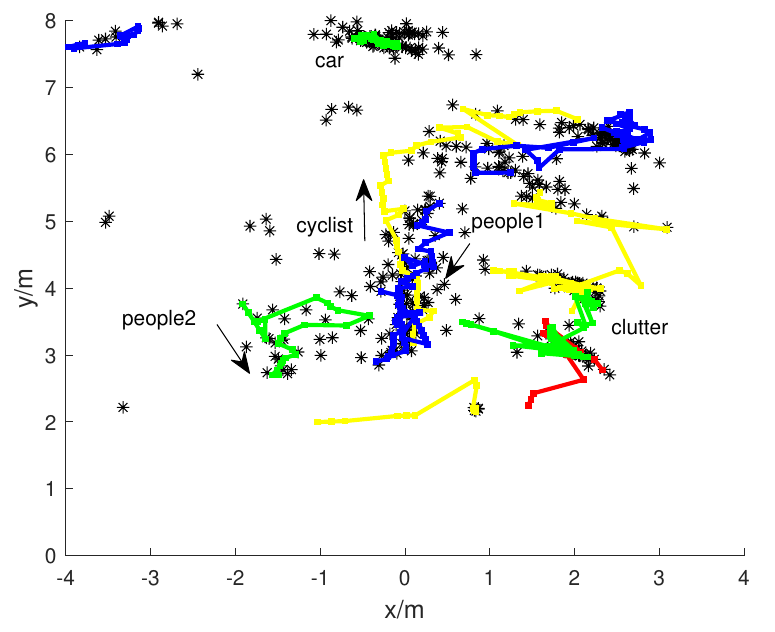}}
\subfigure[\tiny MNOMP-SPA-KF (3D-Gate) at 7.5s.]{
\label{expII3D_75}
\includegraphics[width = 38mm]{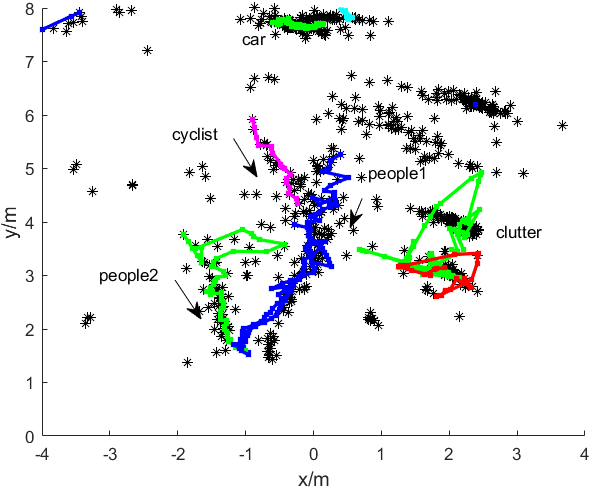}}
\subfigure[\tiny MNOMP-SPA-KF (3D-Gate) at 10s.]{
\label{expII3D_100}
\includegraphics[width = 38mm]{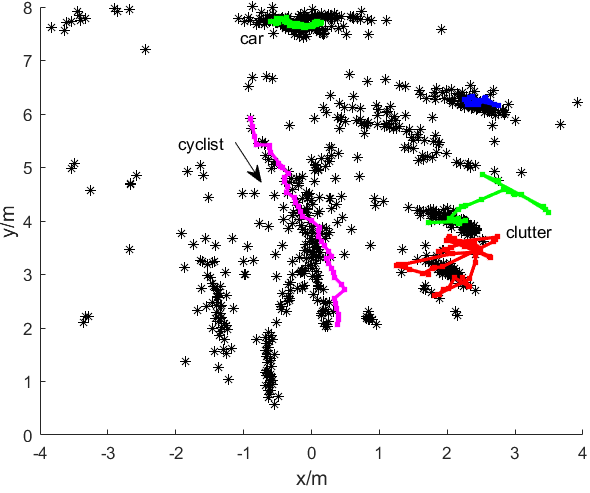}}
\caption{The tracking results of experiment II using 2D-Gate and 3D-Gate in $2.5$s, $5$s, $7.5$s and $10$s, respectively. The arrow indicates the direction of the target movement.}\label{exp2Res}
\end{figure}

The detailed tracking results of MNOMP-SPA-KF algorithm with both 2D-Gate and 3D-Gate are shown in Fig. \ref{exp2Res}. In the first $2.5$s, the cyclist moves away from the radar while people 1 approaches it, depicted by two separate trajectories in both Fig. \ref{expII2D_25} and Fig. \ref{expII3D_25}.
In 2.5s-5s, both people continue to move towards the radar while the cyclist begins to return, resulting in the disappearance of trajectory representing the cyclist in Fig. \ref{expII2D_50} and Fig. \ref{expII3D_50}.
Fig. \ref{expII3D_50} further illustrates the approximate trajectory of the cyclist during the return phase using MNOMP-SPA-KF (3D-Gate).
In 5s-7.5s, the motion states of both people remain unchanged while the cyclist returns and approaches the radar.
In Fig. \ref{expII2D_75} and Fig. \ref{expII3D_75}, trajectories of all three targets are generated simultaneously, but MNOMP-SPA-KF (2D-Gate) exhibits mismatch and false alarms in estimating the trajectory of people 2, while MNOMP-SPA-KF (3D-Gate) does not have these issues.
In 7.5s-10s, both people move out of the radar's observation range, leaving only the cyclist's trajectory visible in Fig. \ref{expII2D_100} and Fig. \ref{expII3D_100}.

Comparing MNOMP-SPA-KF (2D-Gate) with MNOMP-SPA-KF (3D-Gate), it is evident that MNOMP-SPA-KF (2D-Gate) generates mismatches between targets and measurements during tracking, leading to false alarms.
Conversely, the MNOMP-SPA-KF (3D-Gate) can mitigate interference from clutter and other targets during tracking, leading to less mismatches and false alarms. Therefore, MNOMP-SPA-KF (3D-Gate) performs better.



\subsection{Experiment III}
Fig. \ref{Exp3scene} shows the experimental setup for experiment III, which contains two people, and a moving car. The car approaches the radar while people 1 walks away from the radar, and people 2 walks towards the radar. The total observation time is also $10$s.

\begin{figure}
\centering
\includegraphics[width = 100mm]{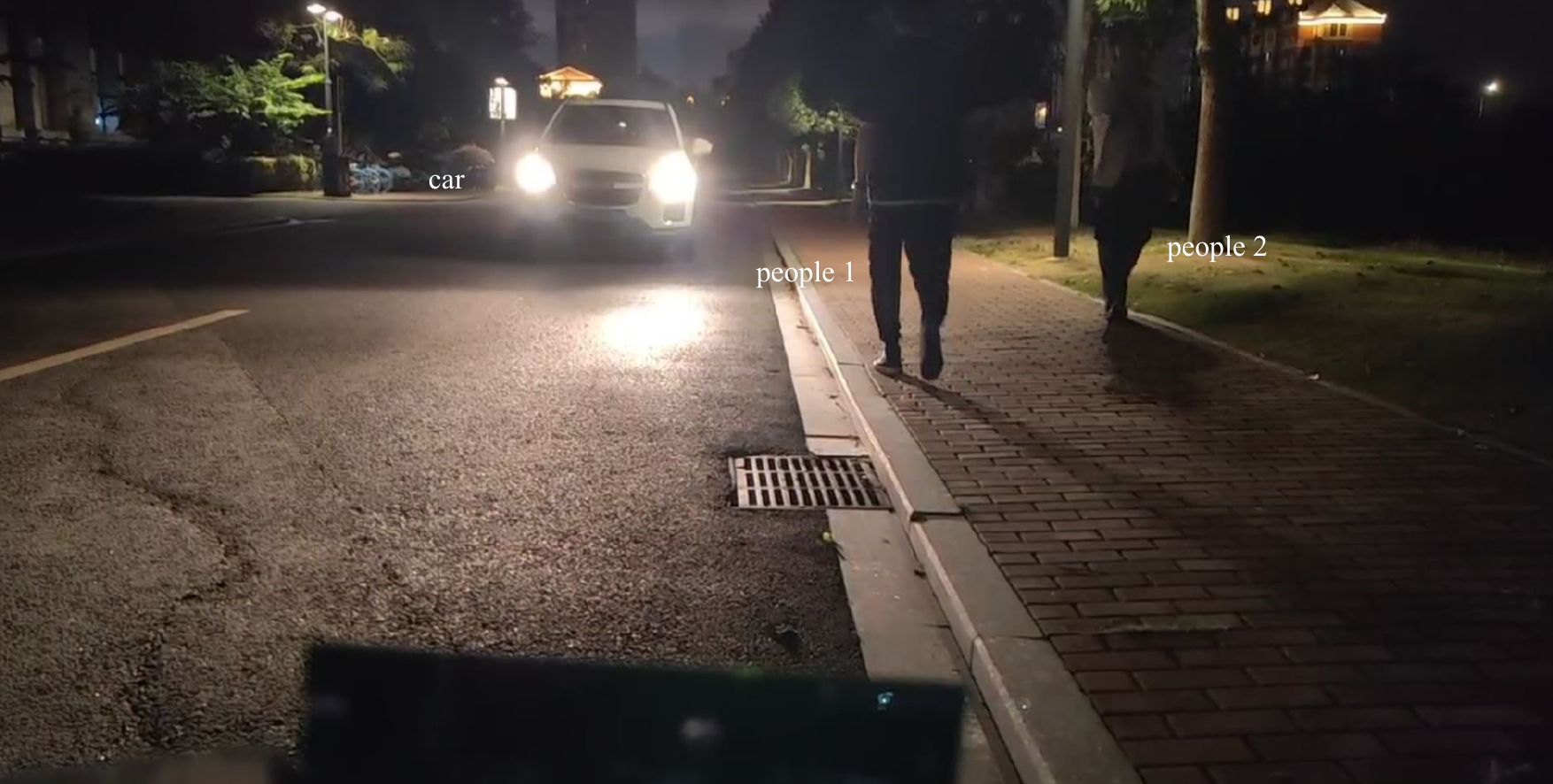}
\caption{The scene and original state of experiment III.}\label{Exp3scene}
\end{figure}

\begin{figure}
\centering
\subfigure[\scriptsize MNOMP-SPA-KF.]{
\label{exp3MNOMP_SPA}
\includegraphics[width = 48mm]{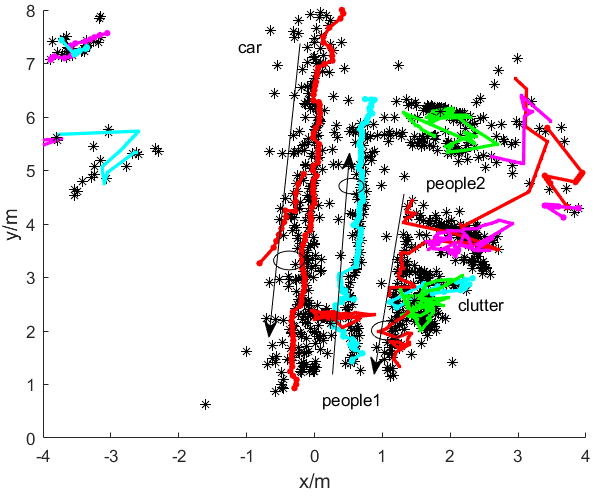}}
\subfigure[\scriptsize MNOMP-SPA-subKF.]{
\label{exp3MNOMP_SPA_subKF}
\includegraphics[width = 48mm]{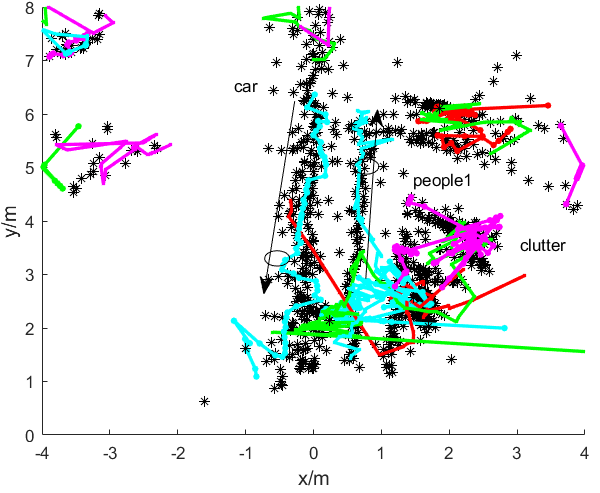}}
\subfigure[\scriptsize MNOMP-PDA-KF.]{
\label{exp3MNOMP_PDA}
\includegraphics[width = 48mm]{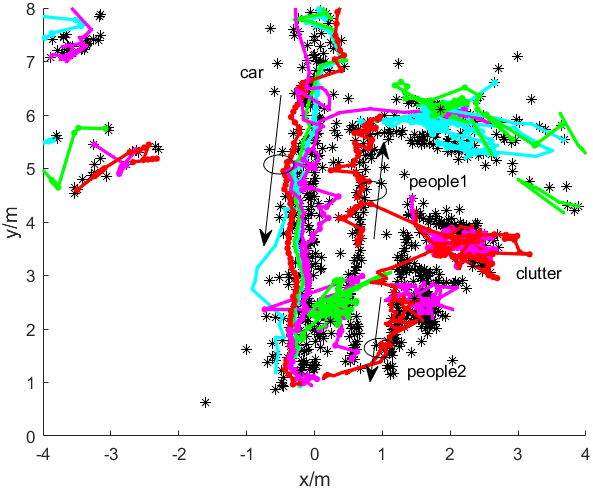}}
\subfigure[\scriptsize MNOMP-PDA-subKF.]{
\label{exp3MNOMP_PDA_subKF}
\includegraphics[width = 48mm]{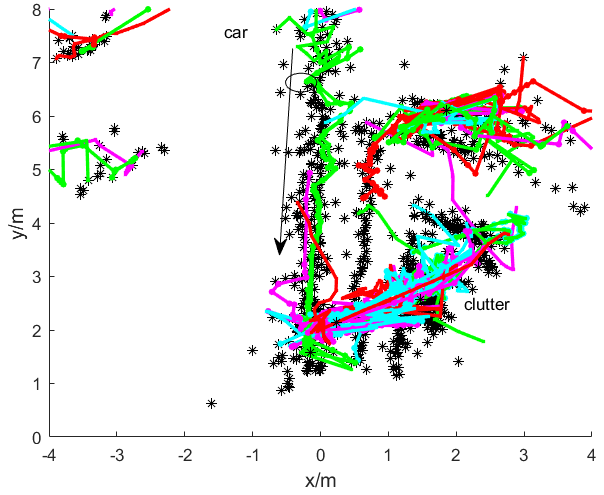}}
\subfigure[\scriptsize MNOMP-JPDA-subKF.]{
\label{exp3MNOMP_JPDA_KF}
\includegraphics[width = 48mm]{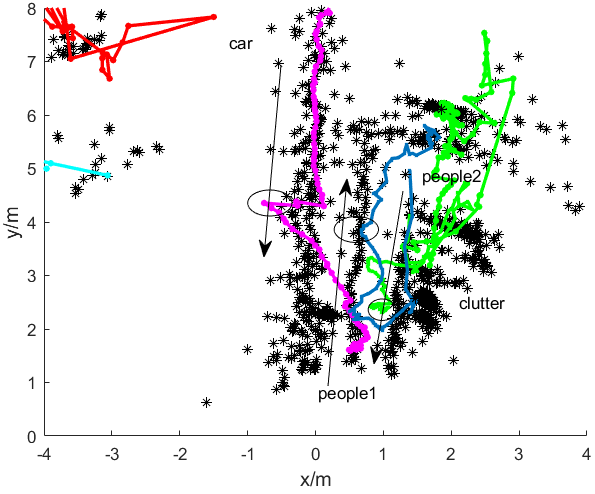}}
\caption{The tracking results of experiment III by different algorithms. The arrow indicates the direction of the target movement.}\label{exp3Res}
\end{figure}

\begin{figure}
\centering
\subfigure[\tiny MNOMP-SPA-KF (2D-Gate) at 2.5s.]{
\label{expIII2D_25}
\includegraphics[width = 38mm]{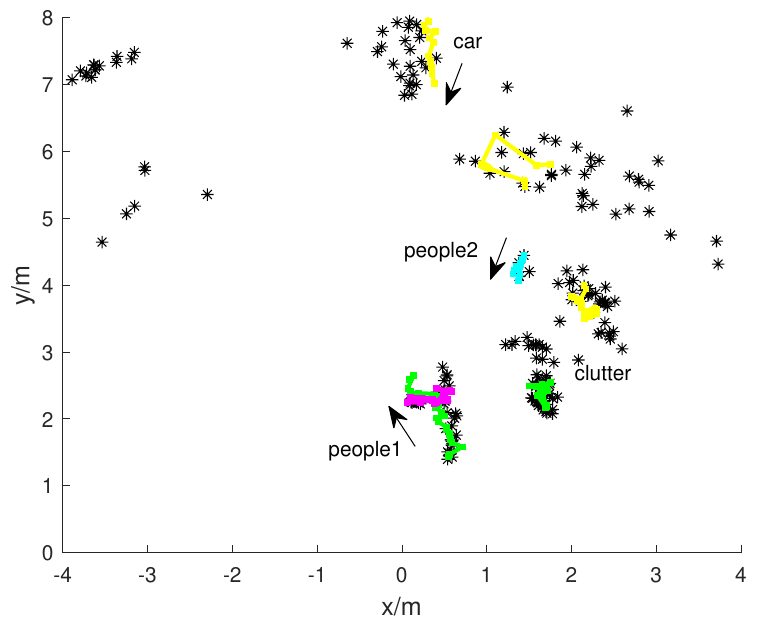}}
\subfigure[\tiny MNOMP-SPA-KF (2D-Gate) at 5s.]{
\label{expIII2D_50}
\includegraphics[width = 38mm]{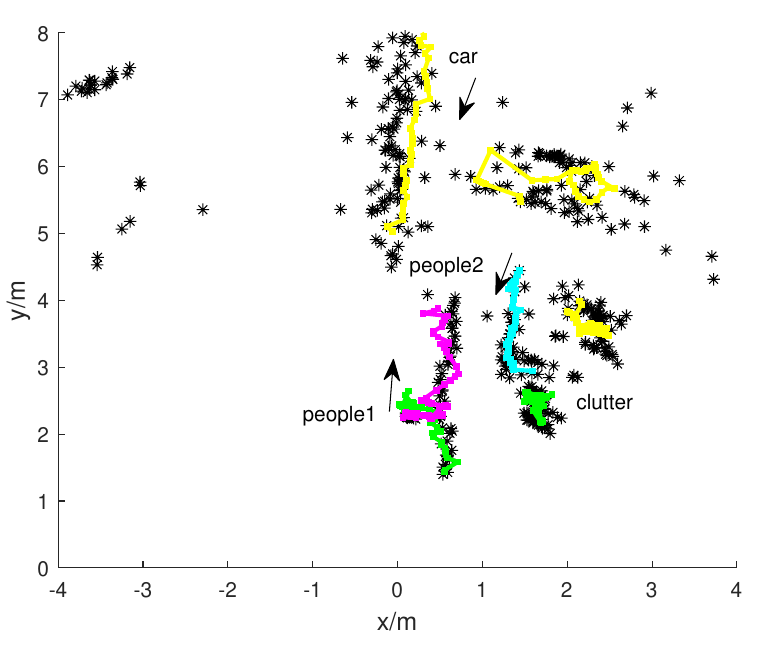}}
\subfigure[\tiny MNOMP-SPA-KF (2D-Gate) at 7.5s.]{
\label{expIII2D_75}
\includegraphics[width = 38mm]{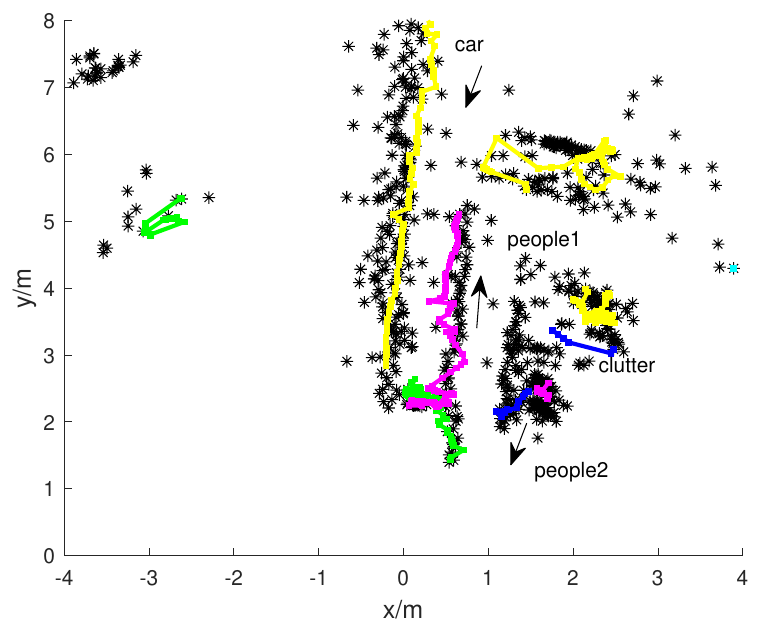}}
\subfigure[\tiny MNOMP-SPA-KF (2D-Gate) at 10s.]{
\label{expIII2D_100}
\includegraphics[width = 38mm]{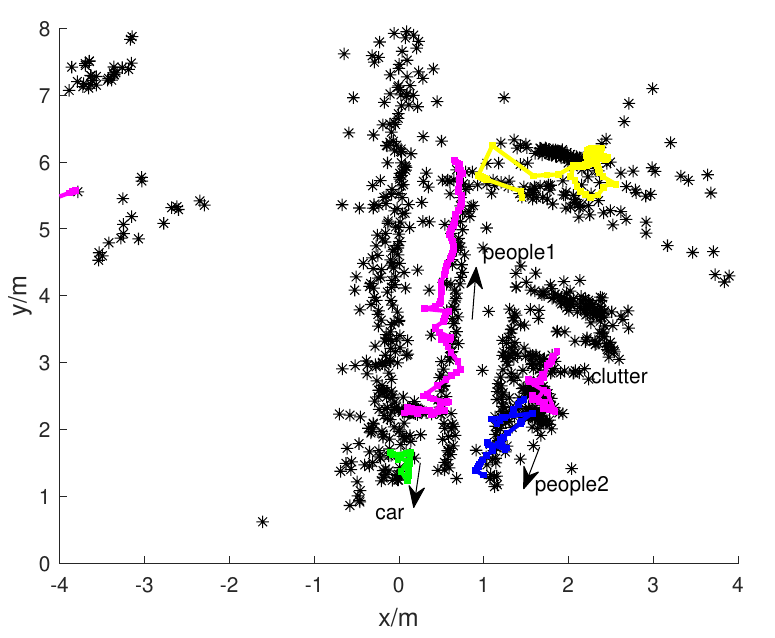}}
\subfigure[\tiny MNOMP-SPA-KF (3D-Gate) at 2.5s.]{
\label{expIII3D_25}
\includegraphics[width = 38mm]{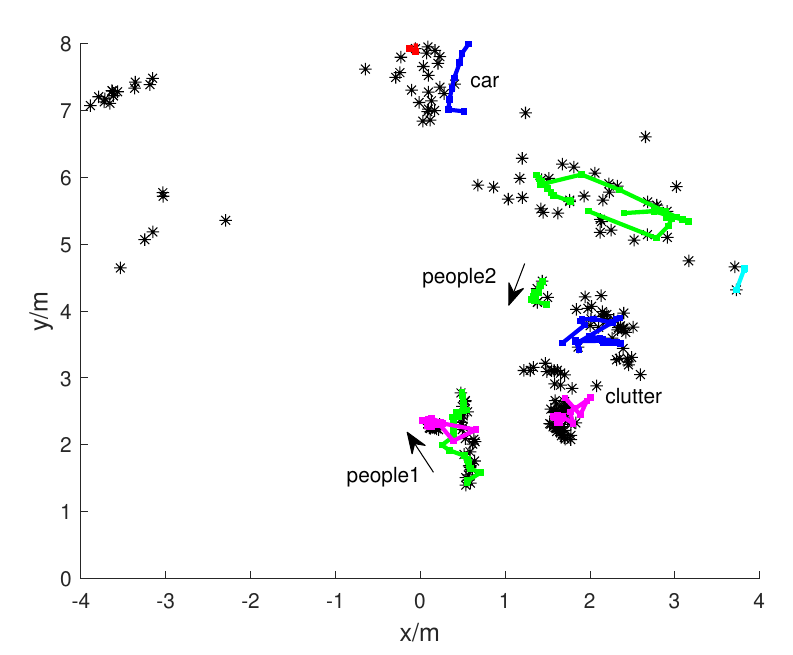}}
\subfigure[\tiny MNOMP-SPA-KF (3D-Gate) at 5s.]{
\label{expIII3D_50}
\includegraphics[width = 38mm]{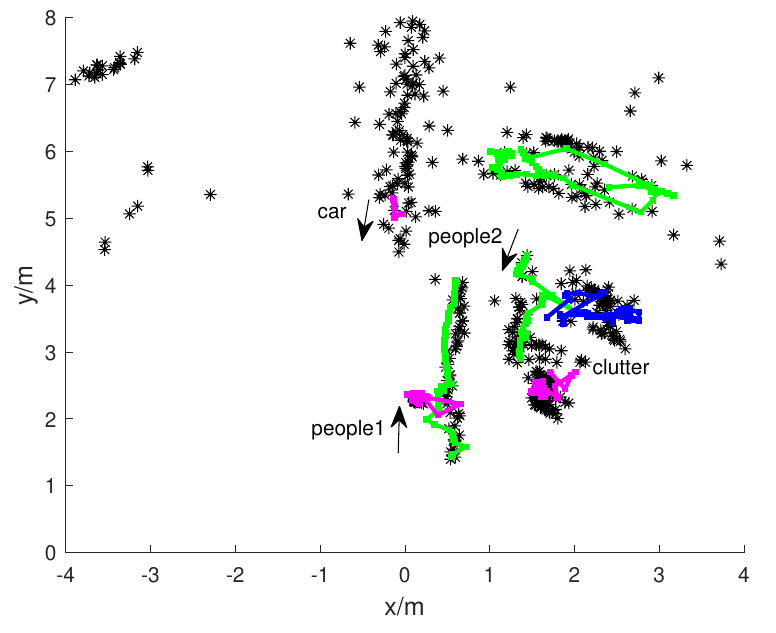}}
\subfigure[\tiny MNOMP-SPA-KF (3D-Gate) at 7.5s.]{
\label{expIII3D_75}
\includegraphics[width = 38mm]{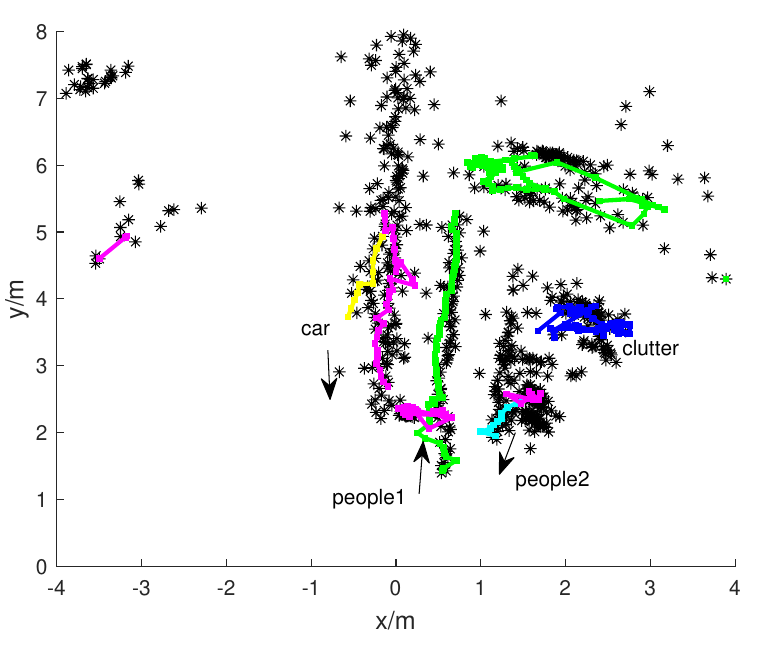}}
\subfigure[\tiny MNOMP-SPA-KF (3D-Gate) at 10s.]{
\label{expIII3D_100}
\includegraphics[width = 38mm]{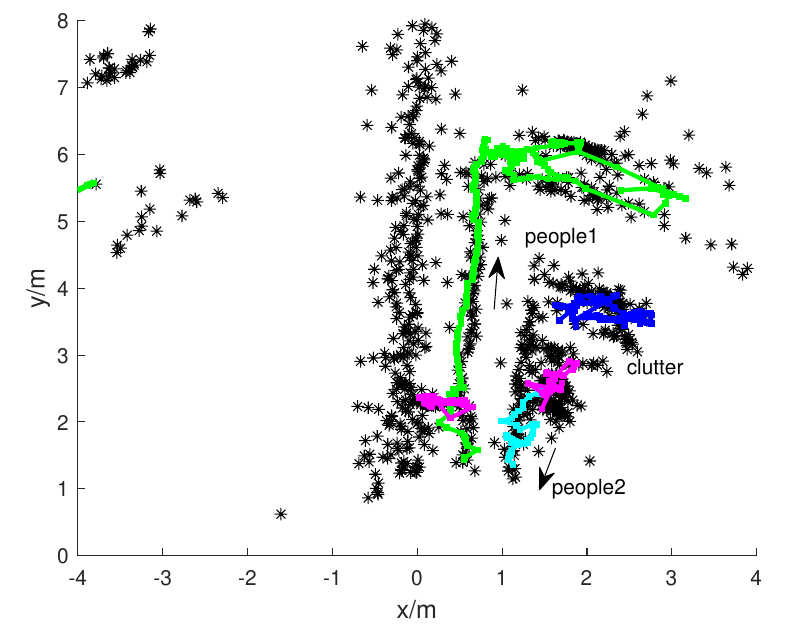}}
\caption{The tracking result of experiment III using 2D-Gate and 3D-Gate in $2.5$s, $5$s, $7.5$s and $10$s, respectively. The arrow indicates the direction of the target movement.}\label{exp3Res}
\end{figure}
Fig. \ref{exp3MNOMP_SPA}-Fig. \ref{exp3MNOMP_JPDA_KF} illustrate the target trajectories produced by the the benchmark algorithms and the proposed MNOMP-SPA-KF.
From Fig. \ref{exp3MNOMP_SPA}, we observe that MNOMP-SPA-KF algorithm roughly outlines the trajectories of the two people and the car. Fig. \ref{exp3MNOMP_SPA_subKF} $ \sim $  Fig. \ref{exp3MNOMP_PDA_subKF} show that MNOMP-SPA-subKF, MNOMP-PDA-KF and MNOMP-PDA-subKF cause many false tracks and MNOMP-PDA-subKF fails to track the two people. Fig. \ref{exp3MNOMP_JPDA_KF} shows that MNOMP-JPDA-subKF has large tracking errors of the car and people 1 in the end.

We display the tracking result by MNOMP-SPA-KF at the moments of $2.5$s, $5$s, $7.5$s and $10$s, which are shown in Fig. \ref{expIII2D_25} $-$ Fig. \ref{expIII3D_100}.
During the tracking of people 1, MNOMP-SPA-KF (2D-Gate) results in disrupt of the trajectory, and generate a new trajectory at about $2$s 
shown in Fig. \ref{expIII2D_25},  which also can be seen in Fig. \ref{expIII2D_50}, whereas MNOMP-SPA-KF (3D-Gate) provides a complete trajectory.
In 7.5s-10s, the tracking error of people 2 produced by MNOMP-SPA-KF (2D-Gate) is slightly larger than that produced by MNOMP-SPA-KF (3D-Gate), as shown in Fig. \ref{expIII2D_75} and Fig. \ref{expIII2D_100}. 
It can be concluded that MNOMP-SPA-KF (3D-Gate) performs best.

To show the advantage of the proposed algorithm's time-consuming, we implement the original 2D-MNOMP consisting of the forward and backward steps, multiple 2D-FFTs. It is shown that the original 2D-MNOMP and the proposed 2D-MNOMP takes about 1.69s and 0.075s for each frame lasting 0.1s, respectively. In addition, for the above three experiments lasting 6.4s, 10s and 10s, respectively, the prposed MNOMP-SPA-KF takes 5.008s, 7.689s, and 7.379s, respectively, which are less than the durations of the data.


\section{Conclusion}
In this paper, a systematic solution named MNOMP-SPA-KF algorithm for target sensing and tracking with mmWave Radar is proposed. MNOMP-SPA-KF integrates three modules, namely MNOMP for target estimation and detection, SPA for data association, and KF for target tracking. In particular, MNOMP achieves superresolution, high estimation accuracy, CFAR, and low computational complexity, SPA integrates the radial velocity estimates for data association, and KF takes the measurement correlation into account. Substantial numerical simulations and experiments are conducted to show that the three modules benefits the tracking performance, compared to other benchmark algorithms.

Future work will focus on extending the MNOMP-SPA-KF algorithm for target tracking in 4D mmWave radar systems. Conventional 3D mmWave radars typically provide range, azimuth, and Doppler velocity information, and incorporating velocity information into the Kalman filter (KF) step is expected to significantly enhance tracking accuracy. With the development of 4D mmWave radar \cite{4DmmradarTI}, which can measure an additional elevation angle compared to 3D mmWave radar, there is great potential for improving the performance of target tracking systems. Investigating the adaptation of the MNOMP-SPA-KF algorithm to fully leverage the additional elevation information provided by 4D mmWave radar will be a valuable area of study. This extension could further enhance the accuracy and reliability of target tracking, particularly in complex environments where three-dimensional motion needs to be tracked with greater precision \cite{HuangMT}.

\section{Appendix}
\subsection{The CRB ${\rm CRB}([r, v, \theta]^{\rm T})$}\label{CRBomega}
To derive the CRB, we adopt a two stage procedure.
Firstly, we reparameterize the system model and transform it into a $3$-dimensional LSE problem and the closed-form expression of CRBs of frequencies are derived.
Secondly, by using the vector parameter CRB for transformations, the CRBs of the radial distance, radial velocity and the azimuth are obtained.
Below we describe the details.

Note that in the case of Gaussian observation $\mathbf{y}$, the Fisher information matrix can be calculated via utilizing the following lemma.
\begin{lemma}\label{CRBGussianCase}
\cite{Kayest} Assume that $\mathbf{y} \sim \mathcal{N}({\boldsymbol \eta}(\boldsymbol \xi), {\mathbf C}(\boldsymbol \xi))$, so that both the mean and covariance matrix may depend on ${\boldsymbol \xi}$, the Fisher information matrix is given by
\begin{align}\label{FIM of Gaussian}
[\mathbf{F}(\boldsymbol \xi)]_{ij} = \left[ \frac{\partial {\boldsymbol \eta}({\boldsymbol \xi}) }{\partial \xi_{i}} \right]^{\text{T}} {\mathbf C}^{-1}(\boldsymbol \xi)
\left[ \frac{\partial \boldsymbol{\eta}({\boldsymbol \xi}) }{\partial \xi_{j}} \right] + \frac{1}{2} {\rm tr}\left[{\mathbf C}^{-1}(\boldsymbol \xi) \frac{\partial {\mathbf C}(\boldsymbol \xi)}{\partial \xi_{i}} {\mathbf C}^{-1}(\boldsymbol \xi) \frac{\partial {\mathbf C}(\boldsymbol \xi)}{\partial \xi_{j}} \right],
\end{align}
\end{lemma}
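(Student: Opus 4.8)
The plan is to derive the Fisher information matrix directly from its definition as the negative expected Hessian of the log-likelihood, $[\mathbf{F}(\boldsymbol\xi)]_{ij} = -\mathrm{E}\left[\partial^2 \ln p(\mathbf{y};\boldsymbol\xi)/\partial\xi_i\partial\xi_j\right]$, exploiting the fact that for a multivariate Gaussian the log-density has a closed form in terms of $\boldsymbol\eta(\boldsymbol\xi)$ and $\mathbf{C}(\boldsymbol\xi)$. Writing $D$ for the dimension of $\mathbf{y}$ and setting $\boldsymbol\nu \triangleq \mathbf{y}-\boldsymbol\eta(\boldsymbol\xi)$, the log-likelihood is
\[
\ln p(\mathbf{y};\boldsymbol\xi) = -\frac{D}{2}\ln(2\pi) - \frac{1}{2}\ln\det\mathbf{C}(\boldsymbol\xi) - \frac{1}{2}\boldsymbol\nu^{\mathrm T}\mathbf{C}^{-1}(\boldsymbol\xi)\boldsymbol\nu,
\]
so the entire argument reduces to differentiating this expression twice and taking expectations under $\boldsymbol\nu\sim\mathcal{N}(\mathbf 0,\mathbf{C})$.

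First I would compute the score $\partial\ln p/\partial\xi_i$ using the standard matrix-calculus identities $\partial(\ln\det\mathbf{C})/\partial\xi_i = \mathrm{tr}(\mathbf{C}^{-1}\,\partial\mathbf{C}/\partial\xi_i)$ and $\partial\mathbf{C}^{-1}/\partial\xi_i = -\mathbf{C}^{-1}(\partial\mathbf{C}/\partial\xi_i)\mathbf{C}^{-1}$, together with $\partial\boldsymbol\nu/\partial\xi_i = -\partial\boldsymbol\eta/\partial\xi_i$. This yields a score that splits naturally into a ``mean'' contribution linear in $\boldsymbol\nu$ and a ``covariance'' contribution consisting of a deterministic trace term plus a piece quadratic in $\boldsymbol\nu$. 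Differentiating once more with respect to $\xi_j$ and collecting terms produces the Hessian, which I would then take the expectation of term by term.

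The crucial simplification comes from the Gaussian moments of $\boldsymbol\nu$. Terms linear in $\boldsymbol\nu$ vanish because $\mathrm{E}[\boldsymbol\nu]=\mathbf 0$, which kills every cross term coupling $\partial\boldsymbol\eta/\partial\xi_i$ with $\partial\mathbf{C}/\partial\xi_j$; this is precisely why the mean and covariance parameters contribute additively with no coupling. Quadratic terms of the form $\mathrm{E}[\boldsymbol\nu^{\mathrm T}\mathbf{M}\boldsymbol\nu]=\mathrm{tr}(\mathbf{M}\mathbf{C})$ then reproduce the first term of the claim, namely $[\partial\boldsymbol\eta/\partial\xi_i]^{\mathrm T}\mathbf{C}^{-1}[\partial\boldsymbol\eta/\partial\xi_j]$, after substituting $\mathbf{M}=\mathbf{C}^{-1}(\partial\boldsymbol\eta/\partial\xi_i)(\partial\boldsymbol\eta/\partial\xi_j)^{\mathrm T}\mathbf{C}^{-1}$ and using cyclicity of the trace.

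The main obstacle — and the step I would handle most carefully — is the purely covariance-dependent part, where a quartic form $\mathrm{E}[(\boldsymbol\nu^{\mathrm T}\mathbf{A}\boldsymbol\nu)(\boldsymbol\nu^{\mathrm T}\mathbf{B}\boldsymbol\nu)]$ arises. Here I would invoke Isserlis'/Wick's theorem for Gaussian fourth moments, $\mathrm{E}[\nu_a\nu_b\nu_c\nu_d]=C_{ab}C_{cd}+C_{ac}C_{bd}+C_{ad}C_{bc}$, and contract indices so that the random-quadratic contributions combine with the deterministic trace and $\ln\det$ terms. After the cancellations, the survivors collapse into the single factor $\tfrac{1}{2}\mathrm{tr}[\mathbf{C}^{-1}(\partial\mathbf{C}/\partial\xi_i)\mathbf{C}^{-1}(\partial\mathbf{C}/\partial\xi_j)]$. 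The factor of one half and the exact pairing of indices are where bookkeeping errors are most likely, so I would verify the scalar case $D=1$, $\mathbf{C}=\sigma^2(\boldsymbol\xi)$, for which the trace term reduces to $\tfrac{1}{2}\sigma^{-4}(\partial\sigma^2/\partial\xi_i)(\partial\sigma^2/\partial\xi_j)$, as an independent sanity check before asserting the general matrix identity.
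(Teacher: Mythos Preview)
Your derivation is correct and is the standard route to this identity: differentiate the Gaussian log-density twice, take expectations, use $\mathrm{E}[\boldsymbol\nu]=\mathbf{0}$ to kill the mean--covariance cross terms, and invoke Isserlis' theorem on the quartic form to recover the $\tfrac{1}{2}\mathrm{tr}[\cdot]$ piece. The bookkeeping you flag (the factor of one half and the index pairing after the Wick contractions) is indeed where errors typically occur, and your scalar sanity check is a sound way to guard against them.

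There is nothing to compare, however, because the paper does not prove this lemma at all: it is quoted verbatim as a known result from Kay's textbook \cite{Kayest} and then applied. So your proposal is not an alternative to the paper's argument but rather a reconstruction of the textbook proof that the paper simply cites.
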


For the single target scenario setting, by vectorizing the observations $\mathcal{Y}$ in (\ref{radarmeas}), the equivalent measurement model
\begin{align}\label{multiple vector model}
\mathbf{y} = \gamma {\mathbf a}({\boldsymbol \omega}) + {\boldsymbol \varepsilon},
\end{align}
can be obtained, where
\begin{align}\label{defineaNomega}
{\mathbf a}({\boldsymbol\omega}) \triangleq &~ \mathbf{a}_{L}({\omega}_{z})\otimes \mathbf{a}_{M}({\omega}_{y}) \otimes\mathbf{a}_{N}({\omega}_{x}),
\end{align}
$\otimes$ denotes the Kronecker product, ${\boldsymbol\varepsilon}={\rm vec}(\mathcal{\varepsilon})$ and $\boldsymbol \varepsilon \sim \mathcal{CN}({\mathbf 0}, \sigma^2 \mathbf{I}_{NML})$. Straightforward calculation shows that $\mathbf{y} \sim \mathcal{CN}(\gamma {\mathbf a}, \sigma^2 \mathbf{I}_{NML})$, which is also equivalent to
\begin{align}\label{Guassian distribution of y}
\begin{bmatrix}
\Re \{ \mathbf{y} \} \\
\Im \{ \mathbf{y} \} \\
\end{bmatrix} \sim \mathcal{N} \left(
\begin{bmatrix}
\Re \{ \gamma {\mathbf a} \} \\
\Im \{ \gamma {\mathbf a} \} \\
\end{bmatrix},  \frac{\sigma^2}{2} \mathbf{I}_{2NML}.
\right)
\end{align}
Thus, in our setting,
$\boldsymbol{\eta}({\boldsymbol \xi})=[\Re \{ \gamma {\mathbf a} \}; \Im \{ \gamma {\mathbf a} \}]$,
the unknown parameter vector $\boldsymbol \xi$ is
\begin{align}\label{xidef}
{\boldsymbol \xi} = [\omega_x, \omega_y, \omega_z, \varphi, g, \sigma^2]^{\rm T},
\end{align}
where $g$ and $\varphi$ is the modulus and phase of the complex amplitude $\gamma$, and the matrix $\mathbf{C}(\boldsymbol{\xi}) = \frac{\sigma^2}{2} \mathbf{I}_{2NML}$.
Substituting ${\boldsymbol \eta}(\boldsymbol{\xi})$, $\boldsymbol{\xi}$ and $\mathbf{C}(\boldsymbol{\xi})$ into  eq.(\ref{FIM of Gaussian}), the Fisher information matrix $\mathbf{F}(\boldsymbol \xi)$ is given by
\begin{align}\label{FIMcaleq}
\left[\mathbf{F}(\boldsymbol \xi)\right]_{ij} =&
\frac{2}{\sigma^2} \left( \left[\frac{\partial \Re\{ \gamma {\mathbf a} \}}{\partial \xi_i}\right]^{\rm T} \left[\frac{\partial \Re\{ \gamma {\mathbf a} \}}{\partial \xi_j}\right] + \left[\frac{\partial \Im\{ \gamma {\mathbf a} \}}{\partial \xi_i}\right]^{\rm T} \left[\frac{\partial \Im\{ \gamma {\mathbf a} \}}{\partial \xi_j}\right]\right) \nonumber \\
&+ \frac{1}{2} {\rm tr}\left[\frac{2}{\sigma^2} \mathbf{I}_{2NML} \frac{\partial \frac{\sigma^2}{2} \mathbf{I}_{2NML} }{\partial \xi_i} \frac{2}{\sigma^2} \mathbf{I}_{2NML} \frac{\partial \frac{\sigma^2}{2} \mathbf{I}_{2NML} }{\partial \xi_j} \right] \nonumber \\
=& \frac{2}{\sigma^2} \Re\left\{ \left[\frac{\partial \gamma {\mathbf a} }{\partial {\boldsymbol \xi}^{\rm T}}\right]^{\rm H} \left[\frac{\partial \gamma {\mathbf a} }{\partial {\boldsymbol \xi}^{\rm T}}\right] \right\} + NML \left(\frac{1}{\sigma^2}\right)^2 \frac{\partial \sigma^2}{\partial \xi_i} \frac{\partial \sigma^2}{\partial \xi_j} \nonumber \\
=& \frac{2}{\sigma^2} \sum_{n,m,l} \left( \Re \left\{ \left[ \frac{\partial [\gamma \mathbf{a}]_{n,m,l}}{\partial {\boldsymbol \xi}^{\rm T}}\right]^{\rm H} \left[ \frac{\partial [\gamma \mathbf{a}]_{n,m,l}}{\partial {\boldsymbol \xi}^{\rm T}}\right]\right\} + \frac{1}{2 \sigma^2} \frac{\partial \sigma^2}{\partial \xi_i} \frac{\partial \sigma^2}{\partial \xi_j}\right)
\end{align}
where $[\gamma \mathbf{a}]_{n,m,l} = g {\rm e}^{{\rm j} \varphi_{n,m,l}}$, $\varphi_{n,m,l} = \varphi + (n - 1) \omega_x + (m - 1) \omega_y + (l - 1) \omega_z$.
And it can be shown that
\begin{align}
\frac{\partial [\gamma \mathbf{a}]_{n,m,l}}{\partial {\boldsymbol \xi}^{\rm T}} =  \left[(n - 1) g {\rm j}, (m - 1) g {\rm j}, (l - 1) g {\rm j}, g {\rm j}, 1, 0\right] {\rm e}^{{\rm j} \varphi_{n,m,l}}.
\end{align}
As a result,
\begin{align}\label{matrixcal}
&\Re \left\{ \left[ \frac{\partial [\gamma \mathbf{a}]_{n,m,l}}{\partial {\boldsymbol \xi}^{\rm T}}\right]^{\rm H} \left[ \frac{\partial [\gamma {\mathbf a}]_{n,m,l}}{\partial {\boldsymbol \xi}^{\rm T}}\right]\right\} = \nonumber \\
&\begin{bmatrix}
(n - 1)^2 g^2 & (m - 1) (n - 1) g^2 & (l - 1)(n - 1) g^2 & (n - 1) g^2 & 0 & 0 \\
(n - 1)(m - 1) g^2 & (m - 1)^2 g^2 & (l - 1)(m - 1) g^2 & (m - 1) g^2 & 0 & 0 \\
(n - 1)(l - 1) g^2 & (m - 1)(l - 1) g^2 & (l - 1)^2 g^2 & (l - 1) g^2 & 0 & 0 \\
(n - 1) g^2 & (m - 1) g^2 & (l - 1) g^2 & g^2 & 0 & 0 \\
0 & 0 & 0 & 0 & 1 & 0 \\
0 & 0 & 0 & 0 & 0 & 0
\end{bmatrix}.
\end{align}
Substituting (\ref{matrixcal}) into (\ref{FIMcaleq})  yields the Fisher information matrix
\begin{align}
\mathbf{F}(\boldsymbol \xi) = \frac{2 NML g^2}{\sigma^2}
\begin{bmatrix}
\mathbf{\Phi} & \mathbf{O} \\
\mathbf{O}^{\rm T} & {\rm diag}\{1, \frac{1}{2 \sigma^2}\} \\
\end{bmatrix}
\end{align}
where $\mathbf{O}$ is the $4 \times 2$ zero-matrix and the matrix $\mathbf{\Phi}$ is
\begin{align}
\mathbf{\Phi} =
\begin{bmatrix}
\frac{1}{6} (N - 1)(2N - 1) & \frac{1}{4} (M - 1)(N - 1) & \frac{1}{4} (L - 1)(N - 1) & \frac{1}{2}(N - 1) \\
\frac{1}{4} (N - 1)(M - 1) & \frac{1}{6} (M - 1)(2M - 1) &\frac{1}{4} (L - 1)(M - 1) & \frac{1}{2}(M - 1) \\
\frac{1}{4} (N - 1)(L - 1) & \frac{1}{4} (M - 1)(L - 1) & \frac{1}{6} (L - 1)(2L - 1) & \frac{1}{2}(L - 1)\\
\frac{1}{2}(N - 1) & \frac{1}{2}(M - 1) & \frac{1}{2}(L - 1) & 1
\end{bmatrix}.
\end{align}
Thus, the CRB of variable ${\boldsymbol \xi}$ is the inverse of the Fisher information matrix, i.e.,
\begin{align}
{\rm CRB}({\boldsymbol \xi}) = \mathbf{F}^{-1}(\boldsymbol \xi) = \frac{\sigma^2}{2 NML g^2}
\begin{bmatrix}
\mathbf{\Phi}^{-1} & \mathbf{O}^{\rm T} \\
\mathbf{O} & {\rm diag}\{1, 2\sigma^2\}
\end{bmatrix}
\end{align}
where the inverse of $\mathbf{\Phi}$ can be calculated as
\begin{align}
\mathbf{\Phi}^{-1} =
\begin{bmatrix}
\frac{12}{(N^2 - 1)} & 0 & 0 & -\frac{6}{(N + 1)} \\
0 & \frac{12}{(M^2 - 1)} & 0 & -\frac{6}{(M + 1)} \\
0 & 0 & \frac{12}{(L^2 - 1)} & -\frac{6}{(L + 1)} \\
-\frac{6}{(N + 1)} & -\frac{6}{(M + 1)} & -\frac{6}{(M + 1)} & 1 + \frac{3(N - 1)}{N + 1} + \frac{3(M - 1)}{M + 1} + \frac{3(L - 1)}{L + 1} \\
\end{bmatrix}.
\end{align}
Therefore, the CRB ${\rm CRB}([\omega_x, \omega_y, \omega_z]^{\rm T})$ can be obtained as
\begin{align}\label{CRBomegaxz}
{\rm CRB}([\omega_x, \omega_y, \omega_z]^{\rm T}) = \frac{6 \sigma^2}{NML g^2}
\begin{bmatrix}
\frac{1}{(N^2 - 1)} & 0 & 0 \\
0 & \frac{1}{(M^2 - 1)} & 0 \\
0 & 0 & \frac{1}{(L^2 - 1)}
\end{bmatrix}
\end{align}

To obtain the CRB ${\rm CRB}([r, v,\theta]^{\rm T})$, the following lemma is utilized.
\begin{lemma} \label{lamma CRB of vector function}
\cite{Kayest} Assume that it is desired to estimate $\mathbf{t}  = \mathbf{h}(\boldsymbol \nu)$ for an $r$-dimensional function $\mathbf{h}$ and a $p$-element vector $\mathbf{t} $. Then we can get that
\begin{align}\label{CRB of vector function}
\mathbf{CRB}({\mathbf{t}}) = \frac{\partial \mathbf{h}(\boldsymbol \nu) }{\partial {\boldsymbol \nu}^{\rm T}} \mathbf{F}^{-1}(\boldsymbol \nu) \left[\frac{\partial \mathbf{h}(\boldsymbol \nu) }{ \partial {\boldsymbol \nu}^{\rm T}} \right]^{\rm{T}},
\end{align}
where $\mathbf{F}(\boldsymbol \nu)$ is the Fisher infromation matrix of vector $\boldsymbol \nu$ and $\frac{\partial \mathbf{h}(\boldsymbol \nu) }{ \partial  \boldsymbol \nu}$ is the $r \times p$ Jacobian matrix.
\end{lemma}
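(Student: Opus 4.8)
The plan is to reproduce the classical covariance-inequality derivation of the Cram\'{e}r--Rao bound under a smooth reparameterization, treating $\boldsymbol\nu$ as the underlying $p$-dimensional parameter (with nonsingular Fisher information matrix $\mathbf{F}(\boldsymbol\nu)$) and $\mathbf{t}=\mathbf{h}(\boldsymbol\nu)$ as the $r$-dimensional quantity to be estimated, so that $\mathbf{J}\triangleq\partial\mathbf{h}(\boldsymbol\nu)/\partial\boldsymbol\nu^{\rm T}$ is the $r\times p$ Jacobian. I would write the score function $\mathbf{s}(\boldsymbol\nu)=\partial\ln p(\mathbf{y};\boldsymbol\nu)/\partial\boldsymbol\nu$, which under the usual regularity conditions satisfies $\mathrm{E}[\mathbf{s}]=\mathbf{0}$ and $\mathrm{E}[\mathbf{s}\mathbf{s}^{\rm T}]=\mathbf{F}(\boldsymbol\nu)$, and take $\hat{\mathbf{t}}$ to be any unbiased estimator of $\mathbf{t}$, i.e. $\int\hat{\mathbf{t}}\,p(\mathbf{y};\boldsymbol\nu)\,\mathrm{d}\mathbf{y}=\mathbf{h}(\boldsymbol\nu)$.

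First I would differentiate this unbiasedness identity with respect to $\boldsymbol\nu^{\rm T}$ and interchange differentiation and integration; using $\partial p/\partial\boldsymbol\nu^{\rm T}=p\,\mathbf{s}^{\rm T}$ this yields $\mathrm{E}[\hat{\mathbf{t}}\,\mathbf{s}^{\rm T}]=\mathbf{J}$, and because $\mathrm{E}[\mathbf{s}]=\mathbf{0}$ the deterministic $\mathbf{t}$ may be subtracted freely, giving the cross-covariance identity $\mathrm{E}[(\hat{\mathbf{t}}-\mathbf{t})\mathbf{s}^{\rm T}]=\mathbf{J}$. Stacking the two zero-mean vectors $\hat{\mathbf{t}}-\mathbf{t}$ and $\mathbf{s}$ then produces the block covariance matrix
\begin{align}
\mathbf{M}=\begin{bmatrix}\mathrm{Cov}(\hat{\mathbf{t}}) & \mathbf{J} \\ \mathbf{J}^{\rm T} & \mathbf{F}(\boldsymbol\nu)\end{bmatrix}\succeq\mathbf{0},
\end{align}
which is positive semidefinite by construction.

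The concluding step is to take the Schur complement of the $\mathbf{F}(\boldsymbol\nu)$ block: since $\mathbf{F}(\boldsymbol\nu)$ is invertible, $\mathbf{M}\succeq\mathbf{0}$ is equivalent to $\mathrm{Cov}(\hat{\mathbf{t}})-\mathbf{J}\mathbf{F}^{-1}(\boldsymbol\nu)\mathbf{J}^{\rm T}\succeq\mathbf{0}$, i.e. $\mathrm{Cov}(\hat{\mathbf{t}})\succeq\mathbf{J}\mathbf{F}^{-1}(\boldsymbol\nu)\mathbf{J}^{\rm T}$. Identifying the right-hand side with $\partial\mathbf{h}(\boldsymbol\nu)/\partial\boldsymbol\nu^{\rm T}\,\mathbf{F}^{-1}(\boldsymbol\nu)\,[\partial\mathbf{h}(\boldsymbol\nu)/\partial\boldsymbol\nu^{\rm T}]^{\rm T}$ establishes the formula (\ref{CRB of vector function}), where this attainable lower bound is by definition $\mathbf{CRB}(\mathbf{t})$.

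I expect the main obstacle to be technical rather than conceptual, namely justifying the interchange of differentiation and integration in the cross-covariance step (equivalently, the regularity conditions under which the score has zero mean and the bias identity is differentiable) and ensuring the nonsingularity of $\mathbf{F}(\boldsymbol\nu)$ so that the Schur-complement reduction is legitimate. Both hold in the present application: the Gaussian model (\ref{multiple vector model}) is smooth in $\boldsymbol\nu$, and the Fisher information evaluated in (\ref{CRBomegaxz}) is positive definite, so specializing $\mathbf{h}$ to the frequency-to-state map (\ref{FreqToState}) lets the lemma be applied directly to obtain ${\rm CRB}([r,v,\theta]^{\rm T})$.
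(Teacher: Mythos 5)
Your proof is correct: differentiating the unbiasedness identity to obtain the cross-covariance relation $\mathrm{E}[(\hat{\mathbf{t}}-\mathbf{t})\,\mathbf{s}^{\rm T}]=\mathbf{J}$, forming the positive-semidefinite stacked covariance matrix, and taking the Schur complement of the Fisher-information block is the standard derivation of the transformed-parameter CRB. Note that the paper itself gives no proof of this lemma — it is stated as a known result cited from \cite{Kayest} — and your argument is essentially the proof found in that reference, so your approach coincides with the one the paper implicitly relies on.
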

In our setting, $\mathbf{t} = \left[r, v, \theta \right]^{\rm T}$, $\boldsymbol{\nu} = \left[\omega_{x}, \omega_{y}, \omega_{z}\right]^{\rm T}$, the relationship between $\mathbf{t}$ and $\boldsymbol{\nu}$ can be described through a vector function $\mathbf{h}(\boldsymbol{\nu})$, i.e.,
\begin{align}
    \left[r, v, \theta \right]^{\rm T} = \mathbf{h}(\left[\omega_{x}, \omega_{y}, \omega_{z}\right]^{\rm T}) = \left[ \frac{r_{\rm max}}{2 \pi }\omega_x, \frac{ v_{\rm max} }{ \pi } \omega_y, \arcsin\left(\frac{\omega_z}{\pi}\right) \right]^{\rm T}
\end{align}
thus, $\frac{\partial \mathbf{h}(\boldsymbol \nu) }{\partial {\boldsymbol \nu}^{\rm T}}$ is
\begin{align}\label{partialhoveromega}
\frac{\partial \mathbf{h}(\boldsymbol \nu) }{ \partial {\boldsymbol \nu}^{\rm T}} =
\begin{bmatrix}
\frac{r_{\rm max}}{2 \pi} & 0 & 0 \\
0 & \frac{ v_{\rm max} }{ \pi } & 0 \\
0 & 0 & \frac{1}{\pi\cos\theta}
\end{bmatrix}
\end{align}
And the inverse of Fisher information matrix is ${\rm CRB}([\omega_x, \omega_y,\omega_z]^{\rm T})$ shown in eq.(\ref{CRBomegaxz}).
Substituting eq.(\ref{partialhoveromega}) and eq.(\ref{CRBomegaxz}) in eq.(\ref{CRB of vector function}), the CRB ${\rm CRB}([r, v,\theta]^{\rm T})$ 
\begin{align}\label{CRBrvtheta}
	{\rm CRB}([r, v,\theta]^{\rm T}) = \frac{6 \sigma^2}{NML g^2}
    \begin{bmatrix}
		\left(\frac{r_{\rm max}}{ 2 \pi }\right)^2 \frac{1}{N^2 - 1} & 0 & 0 \\
		0 & \left(\frac{ v_{\rm max} }{ \pi }\right)^2 \frac{1}{M^2 - 1} & 0 \\
		0 & 0 & \left(\frac{1}{\pi\cos\theta}\right)^2 \frac{1}{L^2 - 1}
	\end{bmatrix}
\end{align}
is obtained.



\subsection{The CRB ${\rm CRB}([p_x,p_y]^{\rm T})$ and ${\rm CRB}([p_x,p_y,v]^{\rm T})$}\label{CRBpxy}
We calculate the CRB of the measurement $\mathbf{z} = [p_x,p_y]^{\rm T}$ through $\mathbf{t} = [r,\theta]^{\rm T}$ with the help of lemma \ref{lamma CRB of vector function} \cite{Kayest}, which can be shown as
\begin{align}\label{lammaCRBofz}
\mathbf{CRB}(\mathbf{z}) = \frac{\partial \mathbf{g}(\mathbf{t}) }{ \partial  {\mathbf{t}}^{\rm T}} \mathbf{CRB}(\mathbf{t}) \left[\frac{\partial \mathbf{g}(\mathbf{t}) }{\partial {\mathbf{t}}^{\rm T}}\right]^{\rm T},
\end{align}
where the vector function $\mathbf{g}(\mathbf{t})$ is
\begin{align}
    \mathbf{g}(\mathbf{t}) = \left[r \sin \theta, r \cos \theta\right]^{\rm T},
\end{align}
and the Jacobian matrix $\frac{\partial \mathbf{g}(\mathbf{t}) }{ \partial  {\mathbf{t}}^{\rm T}}$ can be shown as
\begin{align}\label{partialg}
\frac{\partial \mathbf{g}(\mathbf{t})}{\partial {\mathbf{t}}^{\rm T}} =
\begin{bmatrix}
\frac{\partial p_{x}}{\partial r} & \frac{\partial p_{x}}{\partial \theta} \\
\frac{\partial p_{y}}{\partial r} &  \frac{\partial p_{y}}{\partial \theta} \\
\end{bmatrix} =
\begin{bmatrix}
\sin \theta & r \cos \theta \\
\cos \theta & -r \sin \theta
\end{bmatrix},
\end{align}
and the CRB of $\mathbf{t} = \left[r, \theta\right]^{\rm T}$ can be extracted from eq.(\ref{CRBrvtheta}).
Substituting ${\rm CRB}\left(\left[r, \theta\right]^{\rm T}\right)$ and eq.(\ref{partialg}) in eq.(\ref{lammaCRBofz}), ${\rm CRB}([p_x,p_y]^{\rm T})$ (\ref{CRBpxpy}) is obtained. In addition, according to eq. (\ref{lammaCRBofz}), ${\rm CRB}([p_x,p_y,v]^{\rm T})$ is 
\begin{align}\label{CRBpxpyv}
{\rm CRB}([p_x,p_y,v]^{\rm T})=\begin{bmatrix}
    {\rm CRB}([p_x,p_y]^{\rm T})& {\mathbf 0}\\
     {\mathbf 0}&\left(\frac{ v_{\rm max} }{ \pi }\right)^2 \frac{1}{M^2 - 1}
\end{bmatrix}.
\end{align}

\subsection{Valid Region} \label{DofVR}
In order to reduce computation complexity, a general method is to set the valid region of measurements by using the prior information of targets. If we only take the position measurements of the targets into account, the valid region 
\begin{align}\label{2DGate}
	\left( {\mathbf{z}_{h}(t)} - \mathbf{H} \mathbf{x}_k(t | t - 1) \right)^{\rm T} {\mathbf S}_k^{-1} \left( {\mathbf{z}_{h}(t)} - \mathbf{H} \mathbf{x}_k(t | t - 1) \right) \leq d_{G,2D}
\end{align}
can be obtained such that the probability of the above event is the gate probability $P_G$, where $d_{G,2D} = F_{\chi^2(2)}^{-1}(P_G)$, and the covariance ${\mathbf S}_k$ corresponding to the predicted covariance matrix of the target is
\begin{align}\label{covarofSk}
	{\mathbf S}_k = \mathbf{H} \boldsymbol{\Sigma}_k(t | t - 1) \mathbf{H}^{\rm T} + \mathbf{R}(t).
\end{align}
As the radial velocity can also be estimated by the 2D-MNOMP, the radial velocity estimates is also taken into consideration in computing the valid region of measurements. Similarly, the valid region of measurement $\left[  p_{x,h}, p_{y,h}, v_h \right]^{\rm T}$ can be written as
\begin{align}
	\left( \left[ p_{x,h}, p_{y,h}, v_h \right]^{\rm T} - \mathbf{H}_{k}^{\prime} {\mathbf x}_k(t | t - 1) \right)^{\rm T} \left( \mathbf{S}_{k}^{\prime} \right)^{-1} \left( \left[ p_{x,h}, p_{y,h}, v_h \right]^{\rm T} - \mathbf{H}_{k}^{\prime} {\mathbf x}_k(t | t - 1) \right) \le d_{G,3D},
\end{align}
where $v_h$ is the radial velocity, $d_{G,3D}$ is the threshold  which can be calculated as $d_{G,3D} = F_{\chi^2(3)}^{-1}(P_G)$, $\mathbf{H}_{k}^{\prime}$ is 
\begin{align}
	\mathbf{H}_{k}^{\prime} &=
	\begin{bmatrix}
	1 & 0 & 0 & 0 \\
	0 & 0 & 1 & 0 \\
	0 & \sin \theta_{k} & 0 & \cos \theta_{k}
	\end{bmatrix}.
\end{align}
In practice, $\theta_{k}$ is unknown but can be estimated by eq.(\ref{pxpytheta}) using the previous measurements when the tracking is stable.

\bibliographystyle{IEEEbib}
\bibliography{strings,refs}

\end{document}